\documentclass[aps,pra,onecolumn,notitlepage,superscriptaddress,showpacs,nofootinbib]{revtex4-1}
\usepackage{amsmath}
\usepackage{subfigure}
\usepackage{amssymb}
\usepackage{amsfonts}
\usepackage{enumerate}
\usepackage{mathrsfs}
\usepackage{graphicx}
\usepackage{epstopdf}
\usepackage{enumerate}
\usepackage{changepage}
\usepackage{bm}
\usepackage{multirow}
\usepackage{lipsum}
\usepackage{booktabs}
\usepackage{tikz}
\usepackage{caption}
\usepackage{subcaption}
\usepackage{extarrows}
\usepackage{appendix}
\usepackage[justification=centering]{caption}
% \usepackage{amsmath,amsfonts,amssymb,caption,color,epsfig,graphics,graphicx,hyperref,latexsym,mathrsfs,revsymb,theorem,url,verbatim,enumerate,epstopdf,tikz,float,multirow,booktabs,appendix}
% \usepackage{ragged2e}
% \hypersetup{colorlinks,linkcolor={blue},citecolor={red},urlcolor={blue}}
% \usetikzlibrary{arrows, decorations.markings}

\captionsetup[table]{justification=raggedright, singlelinecheck=false} 
\captionsetup[figure]{justification=raggedright, singlelinecheck=false}

% \tikzstyle{vecArrow} = [thick, decoration={markings,mark=at position
% 	1 with {\arrow[semithick]{open triangle 60}}},
% double distance=1.4pt, shorten >= 5.5pt,
% preaction = {decorate},
% postaction = {draw,line width=1.4pt, white,shorten >= 4.5pt}]
% \tikzstyle{innerWhite} = [semithick, white,line width=1.4pt, shorten >= 4.5pt]

\newtheorem{definition}{Definition}
\newtheorem{observation}{Observation}

\newtheorem{proposition}{Proposition}

% \def\bcj{\begin{conjecture}}
% 	\def\ecj{\end{conjecture}}
% \def\bcr{\begin{corollary}}
% 	\def\ecr{\end{corollary}}
% \def\bd{\begin{definition}}
% 	\def\ed{\end{definition}}
% \def\bea{\begin{eqnarray}}
% 	\def\eea{\end{eqnarray}}
% \def\bem{\begin{enumerate}}
% 	\def\eem{\end{enumerate}}
% \def\bex{\begin{example}}
% 	\def\eex{\end{example}}
% \def\bim{\begin{itemize}}
% 	\def\eim{\end{itemize}}
% \def\bl{\begin{lemma}}
% 	\def\el{\end{lemma}}
% \def\bma{\begin{bmatrix}}
% 	\def\ema{\end{bmatrix}}
% \def\bpf{\begin{proof}}
% 	\def\epf{\end{proof}}
% \def\bpp{\begin{proposition}}
% 	\def\epp{\end{proposition}}
% \def\bqu{\begin{question}}
% 	\def\equ{\end{question}}
% \def\br{\begin{remark}}
% 	\def\er{\end{remark}}
% \def\bt{\begin{theorem}}
% 	\def\et{\end{theorem}}

%1

\def\squareforqed{\hbox{\rlap{$\sqcap$}$\sqcup$}}
\def\qed{\ifmmode\squareforqed\else{\unskip\nobreak\hfil
		\penalty50\hskip1em\null\nobreak\hfil\squareforqed
		\parfillskip=0pt\finalhyphendemerits=0\endgraf}\fi}
\def\endenv{\ifmmode\;\else{\unskip\nobreak\hfil
		\penalty50\hskip1em\null\nobreak\hfil\;
		\parfillskip=0pt\finalhyphendemerits=0\endgraf}\fi}
% unavailable for beamer:
\newenvironment{proof}{\noindent \textbf{{Proof.~} }}{\qed}
\def\Dbar{\leavevmode\lower.6ex\hbox to 0pt
	{\hskip-.23ex\accent"16\hss}D}
% Define a new 'leo' style for the package that will use a smaller font.
\makeatletter
\def\url@leostyle{%
	\@ifundefined{selectfont}{\def\UrlFont{\sf}}{\def\UrlFont{\small\ttfamily}}}
\makeatother
% Now actually use the newly defined style.
\urlstyle{leo}

	\newcommand{\nc}{\newcommand}
	
	%2=alphabet

	\def\l{\lambda}

	\nc{\bbA}{\mathbb{A}} \nc{\bbB}{\mathbb{B}} \nc{\bbC}{\mathbb{C}}
	\nc{\bbD}{\mathbb{D}} \nc{\bbE}{\mathbb{E}} \nc{\bbF}{\mathbb{F}}
	\nc{\bbG}{\mathbb{G}} \nc{\bbH}{\mathbb{H}} \nc{\bbI}{\mathbb{I}}
	\nc{\bbJ}{\mathbb{J}} \nc{\bbK}{\mathbb{K}} \nc{\bbL}{\mathbb{L}}
	\nc{\bbM}{\mathbb{M}} \nc{\bbN}{\mathbb{N}} \nc{\bbO}{\mathbb{O}}
	\nc{\bbP}{\mathbb{P}} \nc{\bbQ}{\mathbb{Q}} \nc{\bbR}{\mathbb{R}}
	\nc{\bbS}{\mathbb{S}} \nc{\bbT}{\mathbb{T}} \nc{\bbU}{\mathbb{U}}
	\nc{\bbV}{\mathbb{V}} \nc{\bbW}{\mathbb{W}} \nc{\bbX}{\mathbb{X}}
	\nc{\bbZ}{\mathbb{Z}}
	
	%\bbQ denotes the set of rational, real and integer numbers in transparency.
	
	\nc{\bA}{{\bf A}} \nc{\bB}{{\bf B}} \nc{\bC}{{\bf C}}
	\nc{\bD}{{\bf D}} \nc{\bE}{{\bf E}} \nc{\bF}{{\bf F}}
	\nc{\bG}{{\bf G}} \nc{\bH}{{\bf H}} \nc{\bI}{{\bf I}}
	\nc{\bJ}{{\bf J}} \nc{\bK}{{\bf K}} \nc{\bL}{{\bf L}}
	\nc{\bM}{{\bf M}} \nc{\bN}{{\bf N}} \nc{\bO}{{\bf O}}
	\nc{\bP}{{\bf P}} \nc{\bQ}{{\bf Q}} \nc{\bR}{{\bf R}}
	\nc{\bS}{{\bf S}} \nc{\bT}{{\bf T}} \nc{\bU}{{\bf U}}
	\nc{\bV}{{\bf V}} \nc{\bW}{{\bf W}} \nc{\bX}{{\bf X}}
	%\nc{\bZ}{{\bf Z}} \nc{\bm}{{\bf m}} \nc{\bv}{{\bf v}}
	\nc{\ba}{{\bf a}} \nc{\be}{{\bf e}} \nc{\bu}{{\bf u}}
	\nc{\brr}{{\bf r}} \nc{\bx}{{\bf x}} \nc{\bz}{{\bf z}} \nc{\bv}{{\bf v}} \nc{\bt}{{\bf t}} \nc{\bs}{{\bf s}}

	%\bQ, \bR, \bZ denotes the set of rational, real and integer numbers.
	
	\nc{\cA}{{\cal A}} \nc{\cB}{{\cal B}} \nc{\cC}{{\cal C}}
	\nc{\cD}{{\cal D}} \nc{\cE}{{\cal E}} \nc{\cF}{{\cal F}}
	\nc{\cG}{{\cal G}} \nc{\cH}{{\cal H}} \nc{\cI}{{\cal I}}
	\nc{\cJ}{{\cal J}} \nc{\cK}{{\cal K}} \nc{\cL}{{\cal L}}
	\nc{\cM}{{\cal M}} \nc{\cN}{{\cal N}} \nc{\cO}{{\cal O}}
	\nc{\cP}{{\cal P}} \nc{\cQ}{{\cal Q}} \nc{\cR}{{\cal R}}
	\nc{\cS}{{\cal S}} \nc{\cT}{{\cal T}} \nc{\cU}{{\cal U}}
	\nc{\cV}{{\cal V}} \nc{\cW}{{\cal W}} \nc{\cX}{{\cal X}}
	\nc{\cZ}{{\cal Z}}
	
	% \cX denotes a set, etc in mathematical definition.
	
	\nc{\hA}{{\hat{A}}} \nc{\hB}{{\hat{B}}} \nc{\hC}{{\hat{C}}}
	\nc{\hD}{{\hat{D}}} \nc{\hE}{{\hat{E}}} \nc{\hF}{{\hat{F}}}
	\nc{\hG}{{\hat{G}}} \nc{\hH}{{\hat{H}}} \nc{\hI}{{\hat{I}}}
	\nc{\hJ}{{\hat{J}}} \nc{\hK}{{\hat{K}}} \nc{\hL}{{\hat{L}}}
	\nc{\hM}{{\hat{M}}} \nc{\hN}{{\hat{N}}} \nc{\hO}{{\hat{O}}}
	\nc{\hP}{{\hat{P}}} \nc{\hR}{{\hat{R}}} \nc{\hS}{{\hat{S}}}
	\nc{\hT}{{\hat{T}}} \nc{\hU}{{\hat{U}}} \nc{\hV}{{\hat{V}}}
	\nc{\hW}{{\hat{W}}} \nc{\hX}{{\hat{X}}} \nc{\hZ}{{\hat{Z}}}
	
	\nc{\hn}{{\hat{n}}}
	
	%3=math symbol, personal
	
	%3.1 tensor rank

	% canonical decomposition, namely the convex sum of r product states
	
	% canonical decomposition over the real field
	
	% symmetric canonical decomposition, namely the convex sum of r symmetric product states
	
	% symmetric canonical decomposition over the real field
	
	% orthogonal canonical decomposition, namely the convex sum of r orthogonal product states
	
	% strong orthogonal canonical decomposition, namely the convex sum of r locally orthogonal product states

	%rk=tensor rank with canonical decomposition
	
	%rk=tensor rank with real canonical decomposition
	
	%srk=symmetric tensor rank with symmetric canonical decomposition
	
	%srk=symmetric tensor rank with real canonical decomposition

	%rrk=regularized tensor rank
	
	%rsrk=regularized symmetric tensor rank

	%grk=generic tensor rank equal to the tensor rank of most tensors in the space; there is only one grk
	
	%trk=typical tensor rank equal to the tensor rank of a part of tensors in the space; there may exist a few different trk
	
	%ark=asymmetric tensor rank, where decomposition contains at least one asymmetric product states
	
	%brk=border tensor rank
	
	%bsrk=symmetric border tensor rank

	%ork=orthogonal tensor rank
	
	%sork=strong orthogonal tensor rank

	%3.2 general

	%birank=(rank,rank^\G)

	%cps=closest product state in the geometric measure of entanglement
	
	%cps=closest separable state in the geometric measure of entanglement
	
	%csd=canonical separable decomposition, i.e., reaching the length
	
	\def\dim{\mathop{\rm Dim}}

	%EV=eigenvalue
	\def\ghz{\mathop{\rm GHZ}}

	%Loc=local CPTP map

	\def\max{\mathop{\rm max}}
	\def\min{\mathop{\rm min}}

	%pr=polynomial rank in algebraic geometry for symmetric states
	
	%pro=product states

	%sd=separable decomposition

	%sr=Schmidt rank
	
	\def\tr{\mathop{\rm Tr}}

	%3.3 abbreviation

	%4=math symbol, default

	\newcommand{\bra}[1]{\langle#1|}
	\newcommand{\ket}[1]{|#1\rangle}
	
	\newcommand{\ketbra}[2]{|#1\rangle\!\langle#2|}

	\newcommand{\fc}[2]{\left\lceil\frac{#1}{#2}\right\rceil}

	%5=color

	% open questions
	
	% suspicious result or derivation

	%6=journal

	% \newcommand{\jmp}{J. Math. Phys.}

	%\newcommand{\pra}{Phys. Rev. A~}
	
	% The preamble here sets up a lot of new/revised commands and
	% environments. It's annoying, but please do *not* try to strip these
	% out into a separate .sty file (which could lead to the loss of some
	% information when we convert the file to other formats). Instead, keep
	% them in the preamble of your main LaTeX source file.
	
	%\newcommand{\tr}{\text{Tr}}
	
	\usepackage[
	colorlinks,
	linkcolor = blue,
	citecolor = blue,
	urlcolor = blue]{hyperref}
	\def \qed {\hfill \vrule height7pt width 7pt depth 0pt}
	
	\setcounter{MaxMatrixCols}{10}
	%\newcommand{\ket}[1]{| #1 \rangle}

	% The following parameters seem to provide a reasonable page setup.

	%The next command sets up an environment for the abstract to your paper.

	\newcounter{lastnote}

% \usepackage{xcolor}
% \definecolor{cream}{RGB}{203, 237, 204}
% \pagecolor{cream!90}	

% \date{June 2024}

\begin{document}
\title{Detecting entanglement and nonlocality with minimum observable length}

\author{Zhuo Chen}
\affiliation{Institute for Interdisciplinary Information Sciences, Tsinghua University, Beijing, 100084 China}

\author{Fei Shi}
% \email[]{shifei@mail.ustc.edu.cn}
\affiliation{QICI Quantum Information and Computation Initiative, School of Computing and Data Science,
The University of Hong Kong, Pokfulam Road, Hong Kong}
\author{Qi Zhao}
\email[]{zhaoqi@cs.hku.hk}
\affiliation{QICI Quantum Information and Computation Initiative, School of Computing and Data Science,
The University of Hong Kong, Pokfulam Road, Hong Kong}	

\begin{abstract}
Quantum entanglement and nonlocality are foundational to quantum technologies, driving quantum computation, communication, and cryptography innovations. To benchmark the capabilities of these quantum techniques, efficient detection and accurate quantification methods are indispensable. This paper 
% addresses the challenge of determining the minimum number of parties required to be involved simultaneously for entanglement and nonlocality detection within complex quantum systems, 
focuses on the concept of ``detection length"--—a metric that quantifies the extent of measurement globality required to verify entanglement or nonlocality. We extend the detection length framework to encompass various entanglement categories and nonlocality phenomena, providing a comprehensive analytical model to determine detection lengths for specified forms of entanglement. Furthermore, we exploit semidefinite programming techniques to construct entanglement witnesses and Bell's inequalities tailored to specific minimal detection lengths, offering an upper bound for detection lengths in given states. By assessing the noise robustness of these witnesses,%under global depolarizing noise and bit-flip error models, 
we demonstrate that witnesses with shorter detection lengths can exhibit superior performance under certain conditions.
\end{abstract}

\maketitle

\section{Introduction}
Quantum entanglement and nonlocality are recognized as essential resources in various quantum technologies, including but not limited to the quantum computational speed-up \cite{jozsa2003role, Troels2014speedup}, quantum communication \cite{Ursin2007commu, Buhrman2010commu}, and quantum cryptography \cite{Ekert1991crypto, Scarani2009qkd, BENNETT2014qkd}. These phenomena are significant for the realization of a large-scale, functional quantum computer \cite{NISQ}. Various classes of entanglement together with nonlocality have attracted significant interest in recent research, particularly for multipartite systems \cite{Hirsch2016bell, Jung2011Taming,Chitambar2021non,Zhu2024non}.

For the practical application of quantum entanglement and nonlocality, it is crucial to efficiently detect and accurately quantify these quantum resources \cite{TERHAL2002detect,Tura2014non,Miklin2016multi,Chiara2018ent,Zou2021ent}.
Consider a cryptography task involving a multi-party quantum system as an example model, where each party possesses a component of the system and does not trust the others. Assume that tasks, including decoding information from a quantum system \cite{Zanardi1997code,duclos2010fast}, sharing secret quantum information \cite{hillery1999quantum,Gottesman2000Qss}, or verifying the structure of a shared quantum network \cite{Gottesman2001sign, Kliesch2021cert}, requires detecting entanglement properties within the system.
Then a natural question emerges: What is the minimum number of parties required to simultaneously participate in each measurement to detect entanglement of the whole system and detailed structure? The answer to this question depends on the characteristics of the entangled quantum system:
% For example, there exist two distinct forms of genuine multipartite entanglement (GME), namely the GHZ-class and the W-class, which cannot be transformed into another by stochastic local operations and classical communication (SLOCC) \cite{Dur2000entangle}. In this case, the GHZ-class entanglement tends to need more global information in each probe, while the W-class entanglement needs less.
some types of entanglement may need global measurements that engage the entire system simultaneously. In contrast, others may only require measurements that involve a small number of parties for detection \cite{Gittsovich2010cor,Sawicki2012det,Sperling2013wit,Paraschiv2018marg,Tabia2022ent,shi2024entanglement}. There is substantial interest in characterizing entanglement or nonlocality with the minimal required parties. Previous literature has presented several case studies \cite{Walter2013marginal,tura2014localW, Lu2018mul,Gerke2018num,Navascues2021entanglement, Tabia2022transit}. However, a general analytical framework remains to be proposed.

In practice, during the experimental implementation of entanglement detection, the impact of environmental noise on the detection outcomes is ubiquitous, particularly within multipartite systems.
In global measurements involving all the parties, an error in even a single party can result in a wrong outcome. 
%To describe typical experimental conditions, the noise level can be effectively modeled using a single parameter: the uniform probability of bit-flip errors across each qubit. 
On the contrary, we demonstrate that local measurements are generally more resistant to noise compared to the all-encompassing global measurements. While local measurements offer a more reliable method for assessing entanglement and nonlocality, local few-body measurements may fail to capture the global information of the underlying states and thus fail to detect them. A profound comprehension of the trade-off between detection capability and noise robustness is crucial for enhancing the efficiency of experimental protocols that involve entanglement and nonlocality.
% In this work, we expand the detection length framework to encompass a spectrum of entanglement forms and nonlocality phenomena and describe the interconnections among diverse detection lengths. Moreover, we introduce novel entanglement witnesses and Bell's inequalities with specific minimal detection lengths by employing semidefinite programming (SDP) methodologies. Introducing the bit-flip error model, we analyze the noise robustness for witnesses with different detection lengths, showing that in some cases the witnesses with shorter detection lengths perform better.

In this context, the concept of ``detection length" emerges as a critical metric \cite{shi2024entanglement}, which represents the degree of ``globality'' of the measurements required to ascertain the presence of entanglement or nonlocality.
Our paper extends the detection length framework to encompass diverse entanglement categories and nonlocality phenomena, clarifying the relationships among different detection lengths. Utilizing our analytical model, the detection length for any given form of entanglement can be determined, providing a straightforward metric for entanglement.
Furthermore, we propose novel semidefinite programming (SDP) techniques to construct entanglement witnesses and Bell's inequalities with specific minimal detection lengths, which is also capable of providing an upper bound of detection length for given states. Incorporating global depolarizing noise and bit-flip error models, we assess the noise robustness across various witnesses, finding that those with shorter detection lengths exhibit superior performance under certain conditions.
Collectively, these contributions enhance our understanding of quantum entanglement and nonlocality by offering refined quantitative measures for assessing the presence and magnitude of these quantum properties.

\section{Theoretical formulations}

We denote $\cH:=\otimes_{i=1}^n \cH_i$  as the $n$-partite Hilbert space, and $\cD$ as the set of all density matrices on  $\cH$. Let $S$ be a proper subset of $[n]:=\{1,2,\ldots,n\}$, then we denote $\cH_{S}:=\otimes_{j\in S} \cH_{j}$, and $\rho_S$ as the marginal of $\rho\in \cD$ on $S$, i.e., $\rho_S:=\tr_{\overline{S}}\rho$, where $\overline{S}=[n]\setminus S$.
A pure state $\ket{\psi}\in \cH$ is \emph{product state} if it can be written as $\ket{\psi}=\ket{\alpha}_{1}\otimes \ket{\beta}_{2}\otimes\cdots\otimes \ket{\gamma}_{n}$, where $\ket{*}_i\in \cH_i$ for $1\leq i\leq n$. 
A pure state $\ket{\psi}\in \cH$ is \emph{biproduct state with respect to the bipartition $S|\overline{S}$, or biproduct state on $S|\overline{S}$} for short, if it can be written as $\ket{\psi}=\ket{\phi}_S\otimes \ket{\varphi}_{\overline{S}}$, where 
$\ket{\phi}_S\in \cH_{S}$, and $\ket{\varphi}_{\overline{S}}\in \cH_{\overline{S}}$.
A mixed state $\rho\in \cD$ is called  \emph{fully separable} (\emph{biseparable on $S|\overline{S}$}) if it can be written as a mixture of product states (biproduct states on $S|\overline{S}$), i.e., $\rho=\sum_{i}p_i\ketbra{\psi_i}{\psi_i}$, where each $\ket{\psi_i}$ is a product state (biproduct state on $S| \overline{S}$). A mixed state $\rho\in \cD$ is called \emph{biseparable}
if it can be written as a mixture of biproduct states, where each constituent may be biproduct with respect to different bipartitions. A state $\rho\in \cD$ is called  \emph{entangled (entangled on $S|\overline{S}$)} if it is not fully separable (biseparable on $S|\overline{S}$), and  $\rho$ is called \emph{genuinely entangled} if it is not biseparable. We denote $\cS_k$ as the set of all $k$-subsets of $[n]$, i.e., $|\cS_k|=\binom{n}{k}$, and $|S|=k$ for every $S\in \cS_k$.

Given a state $\rho\in \cD$ and a set of subsets $\cS=\{S_1,S_2,\cdots, S_k\}$ with $S_i\subset [n]$, the \emph{compatibility set} is defined as the collection of all density matrices $\sigma\in\mathcal{D}$ that share the same marginals with $\rho$ on every subset $S_i\in\cS$:
\begin{equation}
  \cC(\rho,\cS)=\{\sigma\in \cD\mid \sigma_{S_i}=\rho_{S_i}, \ \forall \ 1\leq i\leq k\}.
\end{equation}
If certain constraints are imposed on the compatibility set, then the properties of 
$\rho$ can be inferred from its marginals. For example, if $\cC(\rho, \mathcal{S})$ contains only genuinely entangled states, then we say $\cS$ can detect $\rho$'s GME, indicating that measurements on the subsystems in $\cS$ are sufficient to identify $\rho$'s GME \cite{shi2024entanglement}. The GME detection length is defined as \cite{shi2024entanglement}:
\begin{equation}   l_{GME}(\rho):=\min_{\cS} \big\{\max(\cS) ~\big|~ \text{$\cS$ detects $\rho$'s GME} \big\} \, ,
\end{equation}
where $\max{(\cS)}:=\max_{S\in \cS} |S|$ is the maximum size of a
subset in $\cS$. In other words, the GME detection length of a genuinely entangled state quantifies the minimum observable length necessary to detect GME from this state. Similarly, we can define detection lengths for other classes of entanglement.

\begin{definition}\label{def:edl}
    We say that \emph{$\cS$ detect $\rho$'s entanglement (entanglement  on $G|\overline{G}$)} if the compatibility set $\cC(\rho, \mathcal{S})$ contains only 
    entangled states (entangled states on $G|\overline{G}$).  The \emph{entanglement detection length}  is defined as 
\begin{equation}   
l_{Ent}(\rho):=\min_{\cS} \big\{\max(\cS) ~\big|~ \text{$\cS$ detects $\rho$'s entanglement} \big\} \, ,
\end{equation}
and the \emph{entanglement detection length on $G|\overline{G}$} is defined as 
\begin{equation}
  l_{Bipa}(\rho,G):=\min_{\cS}\big\{\max(\cS)~\big|~\text{$\cS$ detects $\rho$'s entanglement on $G|\overline{G}$} \big\} \, .
\end{equation}  
\end{definition}

In particular, for a fully separable state (biseparable state on $G|\overline{G}$, or biseparable state), we denote $l_{Ent}(\rho):=+\infty$ ($l_{Bipa}(\rho,G):=+\infty$, or $l_{GME}(\rho):=+\infty$).
For an entangled state $\rho$, it is impossible to detect its entanglement by using only $k$-body measurement with $k<l_{Ent}(\rho)$, and every experimental scheme that detects $\rho$'s entanglement must include at least one $k$-body measurement with $k\geq l_{Ent}(\rho)$. This principle extends analogously to bipartition entanglement.

The concept of detection length can be directly extended to the nonlocality detection length.  We consider an $n$-partite Bell's scenario, where $n$ distant observers collectively share an $n$-partite state $\rho$. Assume each observer $j$ measures one of two dichotomic observables $A_{j}^{(x_j)}$ with $x_j=0,1$ for $1\leq j\leq n$. Given a subset $S\subset [n]$, the correlations associated with $S$ are described as the expectation values:
\begin{equation}
\tr\left(A^{(x_{j_1})}_{j_1}\otimes A^{(x_{j_2})}_{j_2}\otimes\cdots \otimes A^{(x_{j_m})}_{j_m} \rho\right), \  \text{where} \  S=\{j_1,j_2,\cdots,j_m\} \,. 
\end{equation}
Then Bell's inequalities associated with a set of subsets $\cS$ can be formulated as
\begin{equation}
    \cB(\cS):=\sum_{S\in\cS} c_{j_1,j_2,\cdots, j_m}^{x_{j_1},x_{j_2},\cdots,x_{j_m}} \tr\left(A^{(x_{j_1})}_{j_1}\otimes A^{(x_{j_2})}_{j_2}\otimes\cdots \otimes A^{(x_{j_m})}_{j_m} \rho\right) \leq \beta_C,
\end{equation}
where $\beta_C$ is the classical bound.

\begin{definition}
 We say that \emph{$\cS$ detects $\rho$'s nonlocality} if $\rho$ violates at least one Bell's inequality in the form of $\cB(\cS)$. The \emph{nonlocality detection length} is defined as
\begin{equation}
  l_{Nol}(\rho)=\min_{\mathcal{S}}\{\max{(\cS)}\mid \text{$\cS$ detects $\rho$'s nonlocality}\}\, .
\end{equation}
\end{definition}

The nonlocality detection length describes the minimum length of the Bell's operators needed to detect Bell's nonlocality in a given state. In particular, for a local state $\rho$ (which does not violate any Bell's equalities), we denote  $l_{Nol}(\rho):=+\infty$.
In the next section, we will investigate the relations between various detection lengths.

\section{The relationship between various detection lengths}

The relationship between these categories of entanglement and nonlocality is depicted in Fig.~\ref{fig:ent_structure}.
% For instance, while all states that violate a Bell's inequality are entangled, not all of them necessarily exhibit genuine multipartite entanglement (GME).
A genuinely entangled state must be an entangled state on $G|\overline{G}$ for every $G\subset [n]$, and an entangled state on $G|\overline{G}$ must be an entangled state.
According to these interactions, we could obtain the relationship between various entanglement detection lengths:
\begin{equation}\label{eq:entanglement}
 2\leq l_{Ent}(\rho)\leq l_{Bipa}(\rho,G)\leq l_{GME}(\rho), \ \forall \, G\subset [n],    
\end{equation}
where the lower bound is obtained from the fact that $\cC(\rho, \cS_1)$ contains a fully separable state $\rho_{\{1\}}\otimes \rho_{\{2\}}\otimes\cdots\otimes\rho_{\{n\}}$. The entanglement detection length and nonlocality detection length must satisfy: 
\begin{equation}\label{eq:nonlocality}
    2\leq l_{Ent}(\rho)\leq l_{Nol}(\rho). 
\end{equation}
This is because if $\cS$ detects $\rho$'s nonlocality, then $\cS$ can also detect $\rho$'s entanglement. 

% This is because if $\rho$ violates a Bell's inequality $\cB(\cS)$, then we can find an entanglement witness $W$ supported on $\cS$, 

\begin{figure}[htbp]
  \centering
  \includegraphics[width=8.6cm]{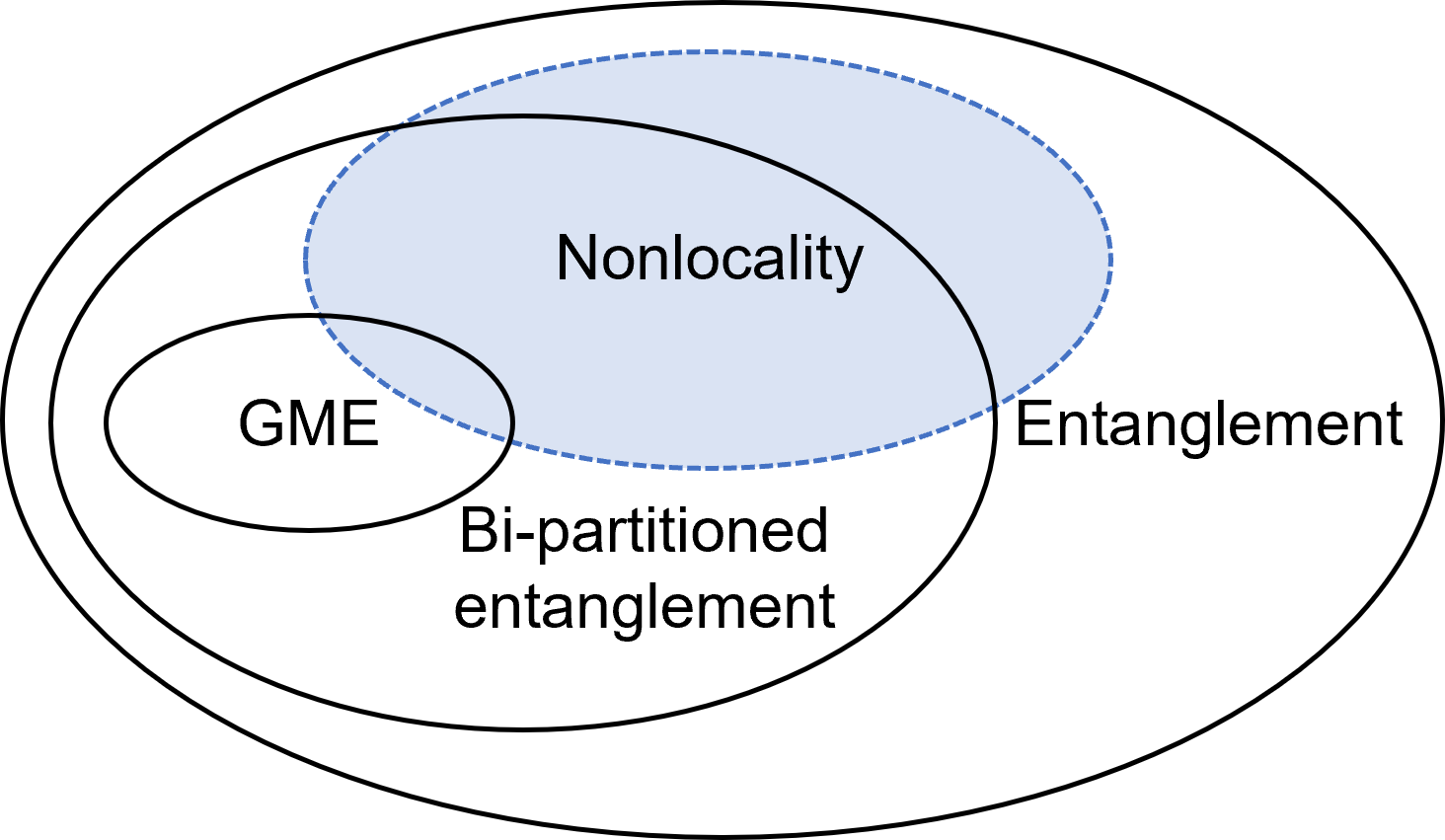}
  \captionsetup{justification=raggedright}
  \caption{The relationship between various forms of entanglement and nonlocality. It straightforwardly implies the relationship among these corresponding detection lengths.}
  \label{fig:ent_structure}
\end{figure}

% Similar to the relationship among different entanglement forms and nonlocality, as illustrated in Fig.~\ref{fig:ent_structure}, there also exists a magnitude relationship between these corresponding detection lengths.

% \begin{proposition}\label{prop:ent}
%   The various entangled detection lengths must satisfy: 
% \begin{equation}
% \begin{aligned}
%      &2\leq l_{Ent}(\rho)\leq l_{Bipa}(\rho,G)\leq l_{GME}(\rho), \ \forall \, G\subset [n],\\
%     &2\leq l_{Ent}(\rho)\leq l_{Nol}(\rho).
% \end{aligned}
% \end{equation}
% \end{proposition}

% \begin{proof}
% For an entangled state $\rho$, it is easy to see that $\rho_{\{n\}}\in \cC(\rho, \cS_1)$, which implies that $l_{Ent}(\rho)\geq 2$.  Moreover, given a genuinely entangled state, since it cannot be written as the convex combination of biseparable states under all possible bipartition, it must also be entangled under arbitrary bipartition $G$. According to the definition of detection length, we have $l_{Bipa}(\rho,G)\leq\l_{GME}(\rho)$ for arbitrary $G$. Similarly, any entangled state under bipartition $G$ must also be an entangled state, hence $l_{Ent}(\rho)\leq l_{Bipa}(\rho,G)$ for arbitrary $G$.

%   It is known that if a state $\rho$ violates Bell's inequality $\cB(\cS)$, it must be entangled, and an entanglement witness (EW) $W$ for $\rho$ can be straightforwardly constructed from this Bell's inequality $\cB(\cS)$ with same detection length $\max_{S_j\in\cS}|S_j|$. In this case, $l_{Ent}(\rho)\leq l_{Nol}(\rho)\leq \max_{S_j\in\cS}|S_j|$.

% \end{proof}

Prior investigations have established that for certain classes of quantum states, including GHZ states, symmetric states, cluster states, and ring states, there is no gap between the entanglement detection length and the GME detection length 
 \cite{shi2024entanglement}.
Consider the class of graph state as an example. For a graph $G$ with $n$-vertices ($n\geq3$), the associated graph state $\ket{G}$ is defined as the simultaneous $+1$-eigenstate of $n$ operators $\{g_i=X_i\otimes \bigotimes_{j\in N_G(i)}Z_j\}^{n}_{i=1}$, where $N_G(i)$ is the set of all vertices adjacent to $i$. The graph state $\ket{G}$ is genuinely entangled for connected graph $G$ \cite{Hein2004graph}, and the compatibility set $\cC(\ket{G},\cS_2)$ contains a fully separable state \cite{Gittsovich2010multi}. Consequently, for graph states, both the entanglement and GME detection length are at least $3$. Specifically, for an $n$-vertices ring graph,  we can construct a $3$-length witnesses to detect both properties in the ring state $\ket{Ring_n}$, as detailed in Section \ref{sec:SDP}, thus $l_{Ent}(\ket{Ring_n})=l_{GME}(\ket{Ring_n})=3$. 
% We could determine the detection length of many other states with a similar method. 
This observation raises the question: Is this equivalence of detection lengths a universal property for all quantum states? Our finding provides a negative answer, demonstrating the existence of a genuinely entangled state that exhibits the maximal gap between the detection lengths of entanglement and GME.

\begin{proposition}\label{prop:ent_GME_gap}
Let $\ket{\psi_n}=\frac{1}{\sqrt{2}}(\ket{100\cdots 0}+\ket{010\cdots 0})$, $\ket{\ghz_n}=\frac{1}{\sqrt{2}}(\ket{0\cdots 00}+\ket{1\cdots 11})$ and 
\begin{equation*}
\rho=p\ketbra{\psi_n}{\psi_n}+(1-p)\ketbra{{\ghz}_n}{{\ghz}_n},  
\end{equation*}
where $\frac{1}{2}<p<1$, we have $l_{GME}(\rho)=n>l_{Ent}(\rho)=2$. And for any bipartition $G\subset [n]$, if $G$ contains only one of $1$ and $2$, $l_{Bipa}(\rho,G)=2$; otherwise $l_{Bipa}(\rho,G)=n$.
\end{proposition}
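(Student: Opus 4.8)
The plan is to establish the four claims separately, exploiting the structure of the state $\rho = p\proj{\psi_n} + (1-p)\proj{\ghz_n}$, which is a mixture of a W-type state supported on the first two qubits and the GHZ state. First I would handle the easy direction: $l_{Ent}(\rho) = 2$. To see this, note that $\rho$ is genuinely entangled (a convex combination of two GME states in a ``generic'' way — but more directly, one computes the two-body marginal $\rho_{\{1,2\}}$ and shows it is entangled). Since $\rho_{\{1,2\}}$ is entangled, any $\sigma \in \cC(\rho,\{\{1,2\}\})$ has an entangled marginal on $\{1,2\}$ and hence is globally entangled; thus $\cS = \{\{1,2\}\}$ detects $\rho$'s entanglement, giving $l_{Ent}(\rho) \le 2$, and the reverse inequality $l_{Ent}(\rho)\ge 2$ is the general lower bound from Eq.~\eqref{eq:entanglement}. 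The same computation handles the bipartition claim when $G$ contains exactly one of $1,2$: I would check that $\rho_{\{1,2\}}$ is not merely entangled but entangled across the cut $\{1\}|\{2\}$ (equivalently, across $G\cap\{1,2\} \,|\, \overline G \cap \{1,2\}$), so the marginal on $\{1,2\}$ already witnesses entanglement on $G|\overline G$, yielding $l_{Bipa}(\rho,G) = 2$.

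The substantive part is the lower bound $l_{GME}(\rho) \ge n$ and, more generally, $l_{Bipa}(\rho,G)\ge n$ when $G$ contains both $1$ and $2$ or neither. For this I need to exhibit, for every proper subset collection $\cS$ with $\max(\cS) \le n-1$, a state $\sigma \in \cC(\rho,\cS)$ that is biseparable (resp. biseparable on $G|\overline G$). The natural candidate is built from the observation that $\ket{\psi_n}$ and $\ket{\ghz_n}$, and hence $\rho$, have the property that all marginals on any set $S$ with $|S| \le n-1$ coincide with those of some biseparable state. Concretely: for the GHZ part, any proper marginal of $\proj{\ghz_n}$ equals the corresponding marginal of a biseparable (indeed fully separable-on-a-cut) state, and the classic fact is that $\cC(\proj{\ghz_n}, \cS_{n-1})$ contains the biseparable mixture $\tfrac12\sum_{\text{cuts}}$ of dephased pieces. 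I would construct $\sigma$ as the analogous ``mixture over bipartitions'' adapted to the full $\rho$: since $\ket{\psi_n}$ is itself biproduct on the cut $\{1,2\}|\{3,\dots,n\}$ (it equals $\ket{W_2}_{12}\otimes\ket{0\cdots0}_{3\cdots n}$), its contribution is already biseparable on any cut separating $\{1,2\}$ from some element of $\{3,\dots,n\}$, while its $(n-1)$-marginals that drop one of qubits $1,2$ look like product states. The key lemma to prove is: for any $\cS$ with $\max(\cS)\le n-1$, the marginals $\{\rho_{S}\}_{S\in\cS}$ are reproduced by a biseparable $\sigma$; and when restricting to cuts $G$ with $\{1,2\}\subseteq G$ or $\{1,2\}\subseteq\overline G$, they are reproduced by a state biseparable specifically on $G|\overline G$. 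This mirrors the graph-state argument cited after Eq.~\eqref{eq:nonlocality} and the GHZ analysis of \cite{shi2024entanglement}.

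The matching upper bound $l_{GME}(\rho) \le n$ is automatic: $\cS = \{[n]\}$ trivially detects GME since the only state with the same full marginal as $\rho$ is $\rho$ itself, which is GME; hence $l_{GME}(\rho) = n$ and, for the relevant $G$, $l_{Bipa}(\rho,G) = n$ as well (upper bound $\le l_{GME}$ from Eq.~\eqref{eq:entanglement}, lower bound from the lemma). I would also double-check the GME of $\rho$ itself — that $p\proj{\psi_n}+(1-p)\proj{\ghz_n}$ is not biseparable for $\tfrac12 < p < 1$ — perhaps via a genuine-entanglement witness or by noting that both components are GME and their ranges intersect in a way that precludes a biseparable decomposition; this is needed for $l_{GME}(\rho)$ to be finite and equal to $n$ rather than $+\infty$, though since the Proposition asserts $l_{GME}(\rho) = n$ I take the finiteness as part of what must be shown.

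The main obstacle I anticipate is the explicit construction and verification of the biseparable compatibility witness $\sigma$ for arbitrary $\cS$ with $\max(\cS) \le n-1$ — in particular ensuring simultaneously that (i) $\sigma$ matches all marginals of $\rho$ on the subsets in $\cS$, including the delicate marginals that straddle qubits $1$ and $2$, and (ii) $\sigma$ admits a biseparable decomposition (with the cut-specificity when $G\supseteq\{1,2\}$ or $G\cap\{1,2\}=\es$). Getting the coherence terms right — the $\ketbra{10\cdots0}{01\cdots0}$ cross terms from $\ket{\psi_n}$ and the $\ketbra{0\cdots0}{1\cdots1}$ terms from $\ket{\ghz_n}$ — so that they are faithfully reproduced by a biseparable state whenever one party is traced out is where the real work lies; the rest is bookkeeping with Definition~\ref{def:edl} and Eq.~\eqref{eq:entanglement}.
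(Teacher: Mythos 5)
Your overall route is the same as the paper's: use the two-body marginal $\rho_{\{1,2\}}=p\proj{\Psi^+}+\tfrac{1-p}{2}(\proj{00}+\proj{11})$ (NPT for $p>\tfrac12$) to get $l_{Ent}(\rho)=2$ and $l_{Bipa}(\rho,G)=2$ when $G$ separates qubits $1$ and $2$, and exhibit a biseparable state with the same proper marginals to push the GME and remaining bipartition lengths up to $n$. Incidentally, the step you flag as the ``main obstacle'' is not one: since matching all $(n-1)$-body marginals implies matching every smaller marginal, it suffices to treat $\cS_{n-1}$, and the single state $\sigma=p\proj{\psi_n}+\tfrac{1-p}{2}\big(\proj{0\cdots0}+\proj{1\cdots1}\big)$ works for every relevant cut at once, because $\rho-\sigma$ consists solely of the GHZ cross terms $\ketbra{0\cdots0}{1\cdots1}+\mathrm{h.c.}$, which vanish under tracing out any single qubit, and each term of $\sigma$ is biproduct on any cut $G|\overline{G}$ with $\{1,2\}\subseteq G$ or $\{1,2\}\subseteq\overline{G}$ (no ``mixture over bipartitions'' is needed).

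The genuine gap is the upper-bound side: you never actually prove that $\rho$ is genuinely entangled (equivalently, entangled on every cut that does not separate $1$ from $2$), which is exactly what makes $l_{GME}(\rho)$ and these $l_{Bipa}(\rho,G)$ equal to $n$ rather than $+\infty$. You acknowledge this must be shown, but the argument you sketch — ``both components are GME'' — rests on a false premise: $\ket{\psi_n}=\ket{\Psi^+}_{12}\otimes\ket{0\cdots0}$ is itself biproduct (as you correctly use elsewhere), and in any case a mixture of GME states need not be GME. The paper closes this with a range argument: for $0<p<1$ the range of $\rho$ is $\mathrm{span}\{\ket{\psi_n},\ket{\ghz_n}\}$, and the only biproduct states in this span are scalar multiples of $\ket{\psi_n}$; hence any decomposition of $\rho$ into biproduct pure states would force $\rho=\proj{\psi_n}$, contradicting $p<1$, so $\rho$ is GME (and, restricting the same observation to a fixed cut $G$ not separating $1$ and $2$, entangled on $G|\overline{G}$). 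Without this (or an equivalent witness-based argument), your proposal establishes only $l_{GME}(\rho)\ge n$ and $l_{Bipa}(\rho,G)\ge n$, not the stated equalities.
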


The technical proof of this proposition, also of others in the subsequent text, can be found in the Appendix \ref{appdix:proof}.
For the detection lengths of entanglement and nonlocality, there are also gaps. For example, 
% , also of others in the subsequent text, 
% Moreover, the similar relationship also exists between the entanglement detection length and nonlocality detection length. 
% \begin{theorem}\label{prop:nonlocality}
%   For arbitrary state $\rho$, the entanglement detection length $l_{Ent}(\rho)$ and the nonlocality detection length $l_{Nol}(\rho)$ satisfy that \begin{equation}
%     2\leq l_{Ent}(\rho)\leq l_{Nol}(\rho).
%   \end{equation}
% \end{theorem}
% \begin{proof}
%   It is known that if a state $\rho$ violates Bell's inequality $\cB(\cS)$, it must be entangled, and an entanglement witness (EW) $W$ for $\rho$ can be straightforwardly constructed from this Bell's inequality $\cB(\cS)$ with same detection length $\max_{S_j\in\cS}|S_j|$. In this case, $l_{Ent}(\rho)\leq l_{Nol}(\rho)\leq \max_{S_j\in\cS}|S_j|$.
% \end{proof}
% Note that these two quantities are not equivalent quantities, gaps exist between them. 
some Werner states are entangled but violate no Bell's inequality \cite{Huber2022werner,Eggeling2000Werner,Bowles2016GME_LHV}. 
Therefore, these $n$-qubit Werner states $\ket{\text{Werner}_n}$ have entanglement detection lengths $l_{Ent}(\ket{Werner_n})\in[2,n]$, while nonlocality lengths $l_{Nol}(\ket{Werner_n})=+\infty$. 
A more complicated state also exists, showing gaps between nonlocality detection lengths and GME detection lengths. Bowles $et\ al.$ introduce a genuinely entangled state that admits a fully local model, hence violating no Bell's inequality \cite{Bowles2016GME_LHV}, which implies that these quantities are not equivalent.

\begin{proposition}\label{prop:GME_LHV}
Denote the $n$-qubit state with two parameters \begin{equation}
  \rho_n^*(\alpha,\theta)=\begin{pmatrix}
    \gamma(t(0)) &0 &\cdots &(\alpha cs)^n\\
    0 &\gamma(t(1)) &\cdots &0\\
    \vdots &\vdots &\ddots &\vdots\\
    (\alpha cs)^n &0 &\cdots &\gamma(t(2^n))
  \end{pmatrix}
\end{equation}
where $c=\cos\theta,s=\sin\theta,
  \gamma(i)=\frac{c^{2(n-i)}s^{2i}}{2^n}\left((1+\alpha)^{n-i}(1-\alpha)^i+(1+\alpha)^{i}(1-\alpha)^{n-i}\right),$
and the function $t(k)$ represents the number of $1$ of the bit-string $k$ in the binary representation. With parameters $\alpha\geq1-1/N^2$ and $\theta>0$, the GME detection length satisfies $l_{GME}(\rho_n^*(\alpha,\theta))=n$. 
\end{proposition}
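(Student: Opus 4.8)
The plan is to show that no set of subsets $\cS$ with $\max(\cS) \le n-1$ can detect the GME of $\rho_n^*(\alpha,\theta)$, i.e., that for any such $\cS$ the compatibility set $\cC(\rho_n^*(\alpha,\theta), \cS)$ contains a biseparable state. Combined with the trivial upper bound $l_{GME}(\cdot) \le n$, this yields $l_{GME}(\rho_n^*(\alpha,\theta)) = n$. The natural strategy is to exhibit, for each proper bipartition, a biseparable state with the same $(n-1)$-body marginals as $\rho_n^*$, and then mix these; but since a single biseparable state in $\cC$ suffices, I would instead look for one cleverly chosen biseparable state matching all marginals of size $\le n-1$. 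The key structural observation is that the state $\rho_n^*(\alpha,\theta)$ is designed (following Bowles et al.\ \cite{Bowles2016GME_LHV}) so that its $(n-1)$-party marginals are ``close to'' those of a biseparable state — in particular the off-diagonal ``coherence'' term $(\alpha cs)^n$ in the full state is the only thing witnessing genuine multipartiteness, and it disappears under any partial trace, so the marginals carry no GME-certifying information.

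First I would write down explicitly the $(n-1)$-body marginal $\tr_k \rho_n^*$ obtained by tracing out party $k$, and observe that it is diagonal (the unique off-diagonal entry $(\alpha cs)^n$ connects $\ket{0\cdots0}$ and $\ket{1\cdots1}$, which differ in party $k$, hence is killed). Next I would argue that any diagonal state — indeed any state supported on the computational basis with matching diagonal — can be realized as a marginal of a fully separable (diagonal) state on all $n$ parties, e.g.\ $\sigma = \sum_{\bx} \gamma'(\bx) \proj{\bx}$ with suitably chosen weights. The subtlety is that I need a single $\sigma \in \cD$ whose marginal on \emph{every} $S$ with $|S| \le n-1$ agrees with $\rho_n^*$; since all these marginals are determined by the diagonal of $\rho_n^*$ together with the constraint that the off-diagonal vanishes on any proper subset, the candidate $\sigma = \mathrm{diag}(\rho_n^*)$ (the dephased state, obtained by zeroing the corners) has exactly the same marginals on all proper subsets as $\rho_n^*$ itself. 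Then I must check that $\mathrm{diag}(\rho_n^*)$ is biseparable — in fact fully separable — because it is diagonal in a product basis, so it is a convex mixture of product projectors $\proj{\bx}$. This establishes that $\cC(\rho_n^*, \cS)$ contains a fully separable (a fortiori biseparable) state whenever $\max(\cS) \le n-1$, so $\cS$ cannot detect GME.

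I would then handle the upper bound $l_{GME}(\rho_n^*(\alpha,\theta)) \le n$ by invoking that $\rho_n^*(\alpha,\theta)$ is genuinely entangled under the stated parameter regime $\alpha \ge 1 - 1/N^2$, $\theta > 0$: trivially $\cS = \{[n]\}$ detects GME since the only state with marginal equal to $\rho_n^*$ on all of $[n]$ is $\rho_n^*$ itself. For this I need to confirm genuine entanglement of $\rho_n^*$ in this regime — this can be cited from \cite{Bowles2016GME_LHV} or verified directly via a GME witness, e.g.\ noting that the corner coherence $(\alpha cs)^n$ exceeds the biseparability threshold $\sqrt{\gamma(t(0))\gamma(t(2^n))}$ adjusted by the appropriate combinatorial factor, when $\alpha$ is close enough to $1$ and $\theta>0$.

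The main obstacle I anticipate is not the marginal-matching argument (which is essentially immediate once one notes the corners vanish under partial trace), but rather verifying \emph{genuine entanglement} of $\rho_n^*(\alpha,\theta)$ cleanly in the regime $\alpha \ge 1-1/N^2$, $\theta>0$ — one must control all the diagonal entries $\gamma(t(k))$ simultaneously against the single off-diagonal term across every bipartition, which is a genuinely multipartite (and somewhat delicate) bookkeeping task. A secondary subtlety is ensuring the dephased state $\mathrm{diag}(\rho_n^*)$ truly reproduces \emph{all} marginals of size up to $n-1$ and not merely the $(n-1)$-body ones; but this follows since lower-order marginals are themselves marginals of the $(n-1)$-body ones, so matching at level $n-1$ suffices by the tower property of partial trace.
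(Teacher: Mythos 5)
Your proposal matches the paper's own proof essentially step for step: the paper also establishes genuine multipartite entanglement by citing the concurrence formula of Bowles \emph{et al.} for the regime $\alpha\geq 1-1/N^2$, $\theta>0$, and then exhibits exactly the dephased state $\sigma=\mathrm{diag}(\rho_n^*)$ (corners zeroed) as a fully separable member of $\cC(\rho_n^*,\cS_{n-1})$, noting that the single off-diagonal entry $(\alpha cs)^n$ vanishes under any partial trace. Your additional remarks (the tower property for lower-order marginals and the trivial upper bound via $\cS=\{[n]\}$) are correct and consistent with the paper's argument, so no gap remains.
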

% \begin{proof}
%   With $\alpha\geq1-1/N^2$ and $\theta>0$, the entanglement concurrence of this state satisfies \cite{Bowles2016GME_LHV} \begin{equation}
%     C(\rho^*_{n}(\alpha,\theta):=\frac{2\sin^n(2\theta)\left(\alpha^n+\frac{1+\alpha}{2}^n+\frac{1-\alpha}{2}^n-1\right)}{(1+\alpha\cos 2\theta)^n+(1-\alpha\cos 2\theta)^n}>0,
%   \end{equation}
%   which means $\rho^*_{n}(\alpha,\theta)$ is genuinely entangled state. Let $\cS$ be the set of all $(n-1)$-length marginal, then $\cC(\rho^*_{n}(\alpha,\theta), \cS)$ contains a fully separable state \begin{equation}
%     \sigma=\begin{pmatrix}
%     \gamma(t(0)) &0 &\cdots &0 \\
%     0 &\gamma(t(1)) &\cdots &0\\
%     \vdots &\vdots &\ddots &\vdots\\
%     0 &0 &\cdots &\gamma(t(2^n))
%   \end{pmatrix}.
%   \end{equation}
%   Hence the state $\rho^*_{n}(\alpha,\theta)$ with parameters $\alpha\geq1-1/N^2$ and $\theta>0$ has GME detection length $l_{GME}(\rho^*_{n}(\alpha,\theta))=n$.
% \end{proof}

Based on Proposition \ref{prop:GME_LHV}, when parameters satisfy 
% \begin{equation}\label{eq:condi}
  $\cos^2(2\theta)\geq \frac{2\alpha-1}{(2-\alpha)\alpha^3}$,
% \end{equation}
the state $\rho_n^*(\alpha,\theta)$ violates no Bell's inequality \cite{Bowles2016GME_LHV,Bowles2016unsteer}, which means that for some parameters $\alpha,\theta$ the nonlocality detection length $l_{Nol}(\rho_n^*(\alpha,\theta))=+\infty$, while the GME detection length $l_{GME}(\rho^*_{n}(\alpha,\theta))=n$.

In Table \ref{table:det_length}, we summarize the detection length for different forms of entanglement across various quantum states. Some of these results are validated through previous propositions or ascertained by entanglement witness (EW) instances derived from our subsequent semidefinite programming (SDP) analysis, and the remaining results are proved in earlier literature.
% In the subsequent text, we further explore the detection length of various states for different forms of entanglement and try to construct entanglement witnesses with corresponding lengths utilizing semi-definite programming (SDP).

\begin{table}[!htp]
\begin{tabular}{|c|c|c|c|c|}
\hline
Detected state $\rho$ &$l_{GME}(\rho)$ &$l_{Bipa}(\rho,\{1\})$& $l_{Ent}(\rho)$& $l_{Nol}(\rho)$\\
\hline
$\ket{GHZ_n}$&$n$~\cite{shi2024entanglement}&$n$&$n$&$n$~\cite{toth2005stabilizer}\\
\hline
$\ket{Dicke_n^k}$&$2$~\cite{shi2024entanglement}&$2$&$2$&$2$~\cite{Tura2015two_body}\\
\hline
$\ket{Cluster_n}$, $\ket{Ring_n}$&$3$~\cite{shi2024entanglement}&$3$&$3$&$3$~\cite{Guhne2005bell_graph}\\
\hline
% $\ket{Ring_n}$&$3$\cite{toth2005stabilizer}&$3$&$3$&$3$\\
% \hline
$0.6*\ketbra{\psi_n}{\psi_n}+0.4*\ketbra{GHZ_n}{GHZ_n}$& $n$& $2$& $2$& $[2,+\infty]$ \\
\hline
$0.6*\ketbra{\psi_n'}{\psi_n'}+0.4*\ketbra{GHZ_n}{GHZ_n}$& $n$& $n$& $2$& $[2,+\infty]$ \\
\hline
$\rho_n^*$&$n$& $n$ &$n$ &$+\infty$~\cite{Bowles2016GME_LHV}\\
\hline
\end{tabular}
\captionsetup{justification=raggedright}
\caption{This table presents various detection lengths for some typical states, where $\ket{GHZ_n}=(\ket{00\cdots 0}+\ket{11\cdots 1})/\sqrt{2}$, $\ket{Dicke_n^k}=\left(\sum_{(i_1,i_2,\cdots,i_n)\in \mathbb{Z}^n_2,wt(i_1,i_2,\cdots,i_n)=k}\ket{(i_1,i_2,\cdots,i_n)}\right)/\sqrt{\binom{n}{k}}$, $\ket{Cluster_n}$ and $\ket{Ring_n}$ are graph states corresponding to the ``line" and ``ring" graph with $n$ vertices, $\ket{\psi_n}=(\ket{100\cdots 0}+\ket{010\cdots 0})/\sqrt{2}$, $\ket{\psi_n'}=(\ket{0\cdots 001}+\ket{0\cdots 010})/\sqrt{2}$, and $\rho_n^*$ is the state defined in Proposition \ref{prop:GME_LHV} with $\alpha=1-1/n^2$ and $\theta=\pi/4$.}
\label{table:det_length}
\end{table}

Utilizing these formulations, we extend the definition of detection length to two entanglement configurations: entanglement intactness \cite{Horodecki2009intact} and entanglement depth \cite{Sorensen2001depth, Toth2005depth}. For the sake of brevity, the detailed outcomes of this analysis are delineated in Appendix \ref{appen:ent_dep}.

\section{Numerical construction and analyses}\label{sec:SDP}
\subsection{Detecting Entanglement and Nonlocality via SDP}
Knowing the information about detection lengths for various states, it is essential to construct an entanglement witness (EW) that precisely corresponds to these lengths for practical implementation.
To enhance the efficiency of entanglement detection, 
% we introduce an efficient relaxation from the constraints of biseparable states to the set of positive partial transpose (PPT) state with respect to the given bipartition \red{(Please cite related reference)}. Utilizing this relaxation, 
we apply a numerical search algorithm, specifically SDP, to construct EWs with limited length. It is noteworthy that this numerical technique can also be employed to estimate the detection length for any given state: constructing an EW with a limited detection length for a given state, or formulating a Bell's inequality with a limited length for correlators, effectively serves as an upper bound on the detection length for our analyses.

% \subsection{Detecting Entanglement and Nonlocality via Semidefinite Programming (SDP)}\label{sec:Ent}
Given an $n$-qubit state $\rho\in \cD$, and a collection of subsets $\mathcal{M}=\{M_1,M_2,\cdots,M_k\}$, with $M_j\subset [n]$, we execute the following SDP: \begin{equation}\label{SDP:ent}
  \begin{split}
    &\min \tr(W\rho)\\
    &\text{s.t. }\tr(W)=d, W=\sum_{M_j\in\mathcal{M}}H_{M_j}\otimes\mathbb{I}_{\overline{M_j}},\\
  &\ \ \ \ \ \ W=P+Q^{T_S}\ \ \ \exists \ S\subset [n],P\geq0,Q\geq0,
  \end{split}
\end{equation}
where $d=2^n$ denotes the system size, $T_S$ is the partial transpose over $\cH_S$, and
% $\mathcal{M}=\{M_1,M_2,\cdots,M_k\}$ is the set of subsets involved in the observable $W$, 
$H_{M_j}$ is a Hermitian operator on $\cH_{M_j}$.
The restriction $\tr(W)=d$ is simply to facilitate a unified normalization. The operator $W=P+Q^{T_S}$ is called \emph{decomposable witness} \cite{Lewenstein2000witness}. This EW has a positive expectation
value on every state that has positive
 partial transpose (PPT) over $\cH_S$.  
% Given the form of the operator $W=P+Q^{T_S}$ for a subset $S$, any fully separable state $\rho$, which is also positive partial transpose (PPT) corresponding to $S$, must satisfy the condition $\tr(W\rho) \geq 0$, thereby qualifying $W$ as an EW. 
If the minimum of the SDP for state $\rho$ is negative, a condition we henceforth denote as ``$\rho$ passing the SDP test," then there exists a subset $S$ such that $\rho$ is not a PPT state over $\cH_{S}$, and is consequently identified as an entangled state on $S|\overline{S}$. For a given EW $W$ (or an operator, in general), we define the length of $W$ as the maximal number of parties involved simultaneously. Then the EWs derived from the SDP associated with $\cM$ are all of $\max(\cM)$-$length$. By assigning the marginals set $\cM$, we can restrict the length of witness return by the SDP, and then construct EW for given states with the exact detection length.
Note that, constrained by the decomposable witness, the SDP test does not certify all entangled states, specifically those classified as PPT entangled states. Moreover, although the SDP cannot precisely ascertain the entanglement detection length for all states, it can provide an upper bound for those states that pass the SDP test.
% The constrained form of the EW restricts its detection capability, confining us to estimating only an upper bound for the detection length, while also efficiently transforming the EW search problem into a SDP framework. 
With the same technique, we also introduce SDP to construct a witness for GME  or bipartitioned entanglement with limited length, as detailed in Appendix \ref{appdix:SDP}.

For a given state $\rho$, the nonlocality detection length $l_{Nol}(\rho)$ is upper-bounded by existing formulations of Bell's inequalities, while the entanglement detection lengths of this state $\l_{Ent}(\rho)$, as claimed in Eq.~\eqref{eq:nonlocality}, serve as a lower bound. Consequently, we could determine the nonlocality detection lengths for several classes of state, as presented in Table \ref{table:det_length}.
To further formulate Bell's inequality with the precise nonlocality detection length $l_{Nol}$, we impose constraints on the preceding SDP and empirically transform the resulting EW into a Bell's inequality \cite{toth2005stabilizer, zhao2022bell,Baccari2020bell,Wu2023bell}. 
% confining to use only the $X$ and $Z$ measurements on each particle, the SDP will return EW composed of only we can convert its returned EW into a Bell's inequality with some transformations \cite{toth2005stabilizer, zhao2022bell}. 
For a given target state $\rho$ and marginal set $\mathcal{M}$, the constraint version of SDP reads: \begin{equation}\label{SDP:nonlocal}
  \begin{split}
    &\min\ \tr(W\rho)\\
    &\exists\ S\subset [n],P\geq0,Q\geq0,\alpha_i\in\mathbb{C}\\
    &\text{s.t. }W=P+Q^{T_S},\tr(W)=d,\\ 
  &\ \ \ \ \ W=\sum_{M_j\in\mathcal{M}}\sum_i\alpha_i\left(\bigotimes_{m\in M_j}O_{m,i}\right) \otimes\mathbb{I}_{\overline{M_j}},
  \end{split}
\end{equation}
where each observer $O_{m,i}$ is either a $Z$, $X$ or $I$ operator on the $m$-th qubit. 
Note that this SDP would return an entanglement witness $W$ composed of terms that are tensor products of solely the $X$, $Z$, and $I$ operators. Then we perform the following assignment on observers \cite{Makuta2021self_test_stab, Santos2023selftest}: 
% for the observer on the first qubit transforming \begin{equation}
%   X\rightarrow (A_1^{(0)}+A_1^{(1)})/\sqrt{2},\ \ Z\rightarrow (A_1^{(0)}-A_1^{(1)})/\sqrt{2},
% \end{equation}
% whereas for the remaining observers \begin{equation}
%   X\rightarrow A_i^{(0)},\ \ Z\rightarrow A_i^{(1)}.
% \end{equation}
\begin{equation}
\begin{split}
  X_i&\rightarrow\begin{cases}
    (A_i^{(0)}+A_i^{(1)})/\sqrt{2},&\text{if }i=1\\
    A_i^{(0)},&\text{otherwise}
  \end{cases},\\
  Z_i&\rightarrow\begin{cases}
    (A_i^{(0)}-A_i^{(1)})/\sqrt{2},&\text{if }i=1\\
    A_i^{(1)},&\text{otherwise}
  \end{cases}.
\end{split}
\end{equation}
By this assignment, we transform the obtained EW $W$ into a Bell's inequality $\cB$ with the same length $max(\cM)$. However, the classical limit of this constructed Bell's inequality may not be violated by the target state $\rho$. This Bell's inequality is deemed nontrivial only when the expectation value $\tr(W\rho)$ exceeds the corresponding classical limit. 

% \red{We will formally propose the noise analytical model in the section \ref{sec:noise}.}

\subsection{Entanglement Detection in Noisy Environments}\label{sec:noise}
When experimentally detecting entanglement, practical measurement implementations are not always precise, including both the procedures of state preparation and measurement. Yet, the entanglement or GME witness is somehow robust against noise for the target state. This robustness ensures the EW's availability in practice and also serves as a measure to evaluate an EW \cite{toth2005stabilizer}.

% \subsection{Global depolarizing noise channel}
Let $W$ be a witness capable of detecting specific entanglement forms from $\rho$ satisfying $\tr(W)=d$. Assume that, due to noise in the experimental environment, the actual measured state $\rho'$ is mixed with the global depolarizing noise \begin{equation}\label{eq:noise_form}
  \begin{split}
    \rho'=\Lambda_{dep}(\rho)=(1-p)\rho+\frac{p\mathbb
    {I}}{d}.
  \end{split}
\end{equation}
% denote the according SDP minimum as $\tr(W\rho)$ for given entanglement witness $W$ returned by a SDP.
Let $\Tilde{p}(W,\rho):=\frac{-\tr(W\rho)}{1-\tr(W\rho)}$ denote the noise tolerance parameter for a given EW $W$.
When the detected state is mixed with noise $p<\Tilde{p}(W,\rho)$, the expectation value of the detected state for witness $W$ is still negative, 
% \begin{equation}
  % \begin{split}
%     \tr(W\rho')&=(1-p)\tr(W\rho)+\frac{p}{d}\tr(W)<0,
%   \end{split}
% \end{equation}
thereby $\rho'$ is still detected as entangled by $W$. For some typical states, we summarize the noise tolerance parameters of witnesses obtained from our SDP corresponding to different forms of entanglement.
We also assess the noise tolerance parameter for these witnesses under local depolarizing noise, yielding similar results to those observed with global depolarizing noise. The noise tolerance outcomes and detailed analyses for both scenarios are detailed in Appendix \ref{appdix:local}.

Through simulation, we conclude a trade-off between noise robustness and detection length for entanglement witnesses: EWs with greater length are challenging to implement experimentally, yet they offer increased noise resilience.
For instance, a $2$-length EW for the $3$-qubit W state $\ket{W_3}$ tolerates noise up to $\Tilde{p}=0.5101$ at most, whereas a $3$-length EW could accommodate noise up to $\Tilde{p}=0.7904$.
This trade-off feature also holds for other entanglement forms, like GME. Moreover, the number of distinct marginal settings $|\cM|$ within an EW $W(\cM)$ affects its noise robustness: an EW that includes more diverse marginal settings tends to exhibit greater noise robustness. For example, a GME witness for the $3$-qubit W state $\ket{W_3}$, limited to 2-local measurements on the marginal settings $\{12\}$ and \{23\}, tolerates noise up to $\Tilde{p}=0.1859$. In comparison, a GME witness incorporating $2$-local measurements across all marginal settings, \{\{12\}, \{23\}, \{13\}\}, shows enhanced noise tolerance $\Tilde{p}=0.3039$. 

% \subsection{Bit-flip measurement error}
In addition to noise inherent in state preparation and storage, measurement implementation can also induce noise, particularly for longer-length observables that are more susceptible to noise.
Let us consider a typical case that $bit$-$flip$ $errors$ occur with a uniform probability $\epsilon$ on each qubit during the measurement process. For example, during a $Z$-basis measurement, a qubit that initially collapses to the state $\ket{0}$ may, with probability $\epsilon$, randomly flip to $\ket{1}$, thereby altering the measurement outcome from $+1$ to $-1$. 
When incorporating the preceding global depolarizing noise channel along with potential bit-flip errors in measurements, we could derive the final expectation value obtained in practice for EW $W$. 

\begin{proposition}
Under the experimental environment of global depolarizing channel with parameter $p$ and measurement bit-flip error with probability $\epsilon$, the expectation value of a $k$-length witness $W$, which satisfies $\tr(W)=d$, is given by
    \begin{equation}
\begin{split}
  \alpha^*(\rho)&=1-(1-2\epsilon)^k(1-(1-p)\tr(W\rho)-p),
\end{split}
\end{equation}
indicating that the noise tolerance parameter of $W$ witness for $\rho$ is determined as \begin{equation}
    p^*:=1-1/\left((1-2\epsilon)^k(1-\tr(W\rho))\right).
\end{equation} 
\end{proposition} 
% \begin{proof}
%     For an EW with preceding normalization restriction $\tr(W)=d$, it can be written as $W=\mathbb{I}-P$, where $P$ is a traceless operator. According to lemma \ref{lemma:bitflip}, the expectation value for a $k$-length EW is given by \begin{equation}\label{alphaBar}
%       \begin{split}
%       \overline{\alpha}(\rho)&=\tr(\rho)-(1-2\epsilon)^k\tr(P\rho)\\
%       &=1-(1-2\epsilon)^k(1-\tr(W\rho)),
%       \end{split}
%     \end{equation}
%     under solely the local bit-flip measurement error. Then substituting the expectation value of the state mixed with global depolarizing noise in Eq.~\eqref{eq:noise_form} into this form, we could complete the calculation of $\alpha^*$. 
% \end{proof}

The proof of this proposition can be found in Appendix \ref{appdix:proof}. In practical experiments nowadays, we could implement measurement with small error probability $\epsilon$ to collect useful information. Suppose $\epsilon$ is small enough to ensure $W$ can still detect entanglement from $\rho$ \begin{equation}\label{eq:eps_req}
  \overline{\alpha}(\rho)<0\Rightarrow \epsilon<\frac{1}{2}\left(1-(1-\tr(W\rho))^{-1/k}\right).
\end{equation}

\begin{figure}[!h]
    \centering
    % \begin{minipage}{\linewidth}
    % \centering
    \subfigure[\ $\epsilon=0.08$]{ 
            \label{} 
            \includegraphics[width=5.8cm]{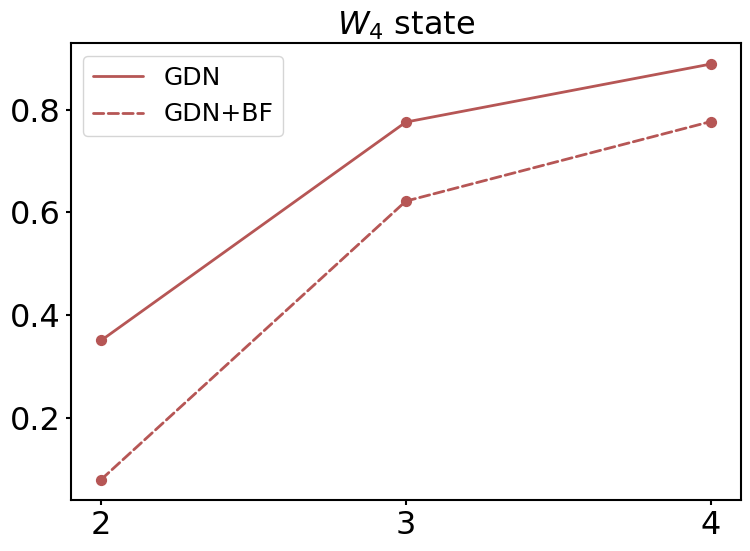}}
    \subfigure[\ $\epsilon=0.08$]{ 
            \label{} 
            \includegraphics[width=5.8cm]{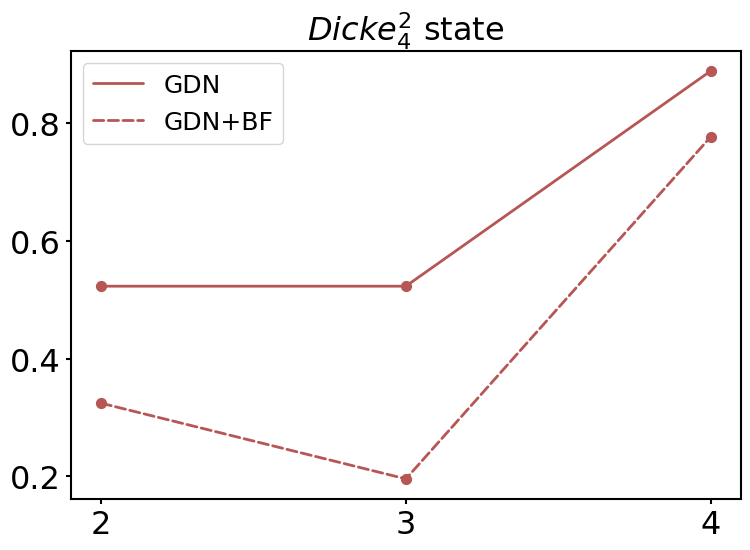}} 
    % \subfigure[$\epsilon=0.08$]{ 
            % \label{} 
            % \includegraphics[width=5.8cm]{figure/mix08.png}}
    \subfigure[\ $\epsilon=0.205$]{ 
            \label{} 
            \includegraphics[width=5.8cm]{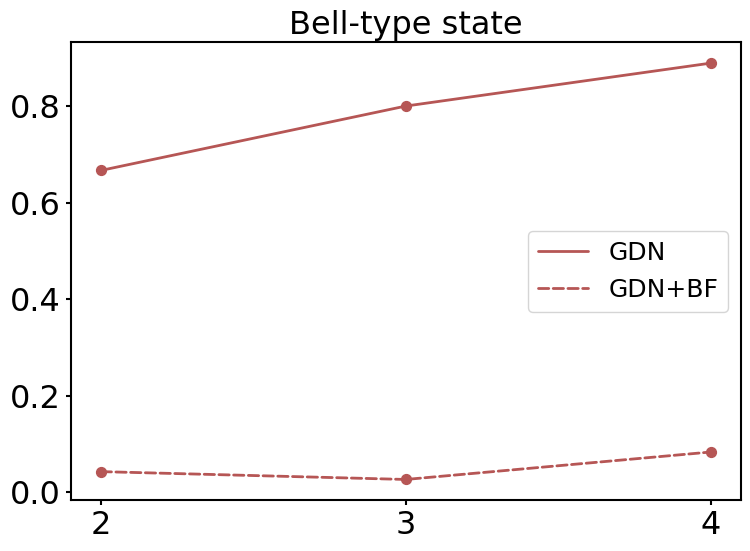}} 
% \subfloat[]{
% \includegraphics[width=7cm]{figure/W4_08.png}
% }
% \subfloat[]{
% \includegraphics[width=7cm]{figure/D42_08.png}
% }
% \subcaption{a}
% \end{minipage}
%     \begin{minipage}[b]{1\linewidth}
% \subfloat[]{
% \includegraphics[width=8.6cm]{figure/mix08.png}
% }
% \subfloat[]{
% \includegraphics[width=8.6cm]{figure/00bell205.png}
% }
% \end{minipage}
%     \includegraphics[width=6cm]{figure/W4_08.png}
% % \includegraphics[width=6cm]{figure/Clu_08.png}
% \includegraphics[width=6cm]{figure/D42_08.png}
% \includegraphics[width=6cm]{figure/mix08.png}
% \includegraphics[width=6cm]{figure/00bell205.png}
\captionsetup{justification=raggedright}
    \caption{Noise tolerance for different witness lengths. Through numerical simulation, we compare the entanglement noise tolerance parameter $\Tilde{p}_{Ent}$ for various types of states, under the global depolarizing noise (GDN) or plus the bit-flip error (GDN+BF), where 
    % the mixed state is $\rho_1=0.6*\ketbra{\psi_4}{\psi_4}+0.4*\ketbra{GHZ_n}{GHZ_n}$ defined in proposition \ref{prop:ent_GME_gap}, and 
    the bell-type state is $\rho_2=\ket{\psi^-}\otimes\ket{00}$. When the bit-flip error probability achieves $\epsilon=0.08$ ($\epsilon=0.205$), the $2$-length witness for the Dicke state $\ket{Dicke_4^2}$ (Bell-type state $\rho_2$) tolerates higher noise than the $3$-length witness.}
    \label{fig:GDN+BF}
\end{figure}

Utilizing the noise tolerance $p^*$ as a measure for EW robustness, we conclude that even though EW with longer length may possess stronger detection ability, the bit-flip measurement error would incur robustness decreasing. Consider two entanglement witnesses for the $N$-qubit cluster state with distinct detection length: \begin{equation}
  \begin{split}
    W_1&=\frac{1}{d(N-1)}\sum_{1\leq i\leq N-1}\left(\mathbb{I}-\overline{S}_i-\overline{S}_{i+1}\right),\\
    W_2&=\frac{1}{d}\prod_{0\leq i\leq K-1}\left(\mathbb{I}-\overline{S}_{4i+1}-\overline{S}_{4i+2}-\overline{S}_{4i+1}\overline{S}_{4i+2}\right),
  \end{split}
\end{equation}
where $\overline{S}_i$ denotes the stabilizer for the cluster state \begin{equation}
  \begin{split}
  &\overline{S}_1:=X_1Z_2,\ \overline{S}_N:=Z_{N-1}X_N,\\
    &\overline{S}_k:=Z_{k-1}X_kZ_{k+1},\ k=2,3,\ldots,N-1,
  \end{split}
\end{equation} 
and $K=\lfloor(N+2)/4\rfloor$. The first EW has detection length $3$, tolerating noise when $p<1/2$. When implementing the second EW, we globally conduct two measurement settings, thereby it has length $\min(4K-1,N)$, and it tolerates noise for Cluster state when $p<2^K/(2^K+1)$. When the bit-flip error occurs with probability $\epsilon$, the noise tolerances of these two witnesses would decrease to $p^*(W_1)=1-1/2(1-2\epsilon)^3$, $p^*(W_2)=1-1/\left((1+2^K)(1-2\epsilon)^{\min(4K-1,N)}\right)$. Combined with the requirement stated in Eq.~\eqref{eq:eps_req}, we find that the shorter EW $W_1$ possess higher (non-zero) noise tolerance than the longer $W_2$, i.e. $p^*(W_1)>p^*(W_2)>0$, when \begin{equation}\label{eq:eps_req2}
  \left(1-\sqrt[C]{\frac{2}{1+2^K}}\right)/2<\epsilon<\left(1-\max\left(\sqrt[C+3]{\frac{1}{1+2^K}},\sqrt[3]{\frac{1}{2}}\right)\right)/2,
\end{equation}
where \begin{equation}
  C=\begin{cases}
    4K-5,&N\bmod{4}=2\\
    4K-4,&\text{otherwise}
  \end{cases}.
\end{equation}

The $\epsilon$ value obeyed the requirement in Eq.~\eqref{eq:eps_req2} indeed exists in some instances. For example, consider the $11$-qubit cluster state, when $0.0857<\epsilon<0.0905$, we have $p^*(W_1)>p^*(W_2)>0$.

For certain states, we have compared the entanglement noise tolerance parameter $\Tilde{p}_{Ent}$ under bit-flip error conditions and without such errors for EWs of varying detection lengths, as illustrated in Fig.~\ref{fig:GDN+BF}. Additionally, we have examined the noise tolerance for GME, which exhibits asymptotic similarity to the aforementioned findings; further details are omitted for brevity.

\subsection{Practical Applications}\label{subsec:app}
Our method's direct applicability to any quantum state makes it suitable for a broad range of entangled states in experimental settings, including mixed states, unlike previous works focused on pure states. By employing SDP, we can construct entanglement witnesses for arbitrary detection lengths, as detailed in Appendix \ref{appdix:construct}. Furthermore, SDP can assess detection length properties for specific quantum states that are hard to analyze theoretically.

Let us consider a kind of entangled state which is significant and easy in experimental contexts \cite{zhang2021ent_gene, Zhou2022exp}: \begin{equation}\label{eq:exp_state}
  \ket{\Phi}=\bigotimes_{i=1}^N\sqrt{\text{SWAP}}_{\{2i,2i+1\}}\ket{\Psi^-}^{\otimes N},
\end{equation}
where $\ket{\Psi^-}=(\ket{01}-\ket{10})/\sqrt{2}$ denotes a Bell's state.
Through numerical simulation, we determine both the entanglement and GME detection length $l_{Ent}(\ket{\Phi})=l_{GME}(\ket{\Phi})=2$ for $N=3$ cases, with noise tolerance parameter $p_{Ent}(\ket{\Psi})=0.9697$ and $p_{Ent}(\ket{\Psi})=0.5908$. The obtained GME witness possesses higher noise tolerance than the original one for this state \cite{zhang2021ent_gene}. 

Furthermore, we develop several instances of nontrivial Bell's inequalities through numerical simulation, which are maximally violated by some graph states and achieve their corresponding nonlocality detection length $l_{Nol}(\rho)$. Note that these outcomes for graph states are anticipated, given that prior research has formulated Bell's inequalities for such states based on their stabilizer properties \cite{toth2005stabilizer,Guhne2005bell_graph,Wu2023bell,Baccari2020bell}. However, our approach potentially yields Bell's inequalities with a more favorable ratio between the quantum and classical bounds, thereby leading to higher noise robustness. For example, our SDP method proposes Bell's inequality for the $4$-qubit Cluster state and Ring state respectively 
\begin{align}
    \cB\left(\ket{Cluster_4}\right)&:\langle(A_1^{(0)}+A_1^{(1)})(A_2^{(1)}+A_3^{(0)}A_4^{(1)})\rangle+\langle (A_1^{(0)}-A_1^{(1)})(A_2^{(0)}A_3^{(1)}+A_2^{(0)}A_4^{(0)})\rangle\leq \beta_C=4,\ \beta_Q=4\sqrt{2},\\
    % \red{\langle(A^{(1)}^{(0)}+A_1^{(1)})(A_2^{(1)}+A_3^{(0)}A_4^{(1)})\rangle}+\langle (A_0-A_1)(B_0C_1+B_0D_0)\rangle\leq \beta_C=4,\ \red{\beta_Q}=4\sqrt{2}.
\cB\left(\ket{Ring_4}\right)&:\langle (A_1^{(0)}+A_1^{(1)})(A_3^{(0)}+A_2^{(1)}A_4^{(1)})+(A_1^{(0)}-A_1^{(1)})A_3^{(1)}(A_2^{(0)}+A_4^{(0)})\rangle\leq\beta_C=4,\ \beta_Q=4\sqrt{2}.
  \end{align}
The maximal quantum value $\beta_Q=$ of both can be reached by taking $A^{(0)}_1=(X_1+Z_1)/\sqrt{2}$, $A^{(1)}_1=(X_1-Z_1)/\sqrt{2}$ and $A^{(0)}_i=X_i$, $A^{(1)}_i=Z_i$ for other $i$. Denote the ratio of the classical bound to the quantum bound of Bell's inequality by $r:=\beta_C/\beta_Q$. Note that, when mixed with global depolarizing noise $p<1-r$, the detected state $\rho'=(1-p)\rho+\frac{p\mathbb{I}}{d}$ still violates the Bell's inequality. For these two Bell's inequalities, the bound ratios $r(\ket{Cluster_4})=r(\ket{Ring_4})=1/\sqrt{2}$ are smaller than those of the naive Bell's inequality constructions for these two states $r_0(\ket{Cluster_4})=r_0(\ket{Ring_4})=3/4$ \cite{Guhne2005bell_graph}. 
Therefore, with a smaller bound ratio $r$, our derived Bell's inequalities tolerate more noise. 
This is attributed to our SDP methodology, which is designed to identify EW that maximizes the resilience to depolarizing noise. However, when a Bell's inequality contains global operators, the ratio can be minimized to $r(\ket{Ring_4})=1/2$, which is the proven theoretical limit for this state \cite{Toth2006bell_graph}. Although our SDP-derived Bell's inequality for the $\ket{Ring_4}$ state exhibits lower noise robustness, it only contains $3$-length operators, offering a tradeoff alternative. Additionally, other literature has also presented Bell's inequalities employing only $3$-length operators for the $\ket{Ring_4}$ state, possessing the same level of noise robustness \cite{Baccari2020bell,Wu2023bell}.

% \section{Practical application of Semidefinite Programming (SDP)}

Our SDP method also efficiently addresses many other scenarios: it ascertains the entanglement detection length of $2$-dimension multipartite Werner state, which is entangled if and only if it violates the PPT separability criterion \cite{Eggeling2000Werner}; for the state $\rho=p\ketbra{\psi_n}{\psi_n}+(1-p)\ketbra{{\ghz}_n}{{\ghz}_n}$ defined in Proposition \ref{prop:ent_GME_gap}, we offer an analytic formulation for a $2$-length EW capable of detecting the bipartitioned entanglement; We additionally establish an upper bound on the detection length for symmetric states for any given parameter, and construct EW with corresponding detection length. For the sake of simplicity, we present the technical details in the Appendix \ref{appdix:prac}.

\section{Conclusion and Outlook}
The concept of ``detection length''  constitutes a metric for evaluating the minimal observable requirement necessary for entanglement detection. In this work, we extended the analytical framework to encompass the detection length for diverse entanglement forms and nonlocality, which is readily applicable to analyzing any given entanglement scenario. Furthermore, we systematically employed the SDP approach to construct entanglement witnesses with the desired detection length. Utilizing this SDP methodology, numerous practical scenarios concerning detection length are effectively addressed. Additionally, we conducted a numerical analysis to assess the noise robustness of witnesses across various states, revealing a trade-off between the noise robustness and the detection length. 
In analyzing the bit-flip measurement error, we compared noise robustness between a long-length EW and a short-length EW derived from our SDP for the Cluster state. Our calculation indicated that the EW with a shorter detection length exhibits better performance when the bit-flip error probability satisfies certain criteria. 
For future exploration, it is worthwhile to experimentally implement and analyze the performance of entanglement witnesses with limited detection lengths. Moreover, the SDP method for constructing witnesses faces scalability challenges. The exponential growth of SDP coefficients with the size of the quantum system makes it hard to construct EWs for large-scale quantum systems efficiently.

\begin{acknowledgements}
We express our gratitude to Lin Chen, Chu Zhao, and Jue Xu for their insightful discussions. This work acknowledges funding from HKU Seed Fund for Basic
 Research for New Staff via Project 2201100596, Guangdong Natural Science Fund—General Programme via Project 2023A1515012185, National Natural Science Foundation of China (NSFC) Young Scientists Fund via Project 12305030, 27300823, Hong Kong Research Grant Council (RGC) via No. 27300823, and NSFC/RGC Joint Research Scheme via Project N\_HKU718/23.
\end{acknowledgements}

\appendix

\section{Technical proof}\label{appdix:proof}
\setcounter{theorem}{1} 
\renewcommand{\thetheorem}{\arabic{theorem}}
\setcounter{proposition}{1} 
\renewcommand{\theproposition}{\arabic{proposition}}
\setcounter{lemma}{0} 
\renewcommand{\thelemma}{\arabic{lemma}}

In this section, we present the details of technical proof for several propositions and lemma. First, we provide a simple observation.
% \begin{theorem}\label{prop:ent}
%   The detection lengths for different forms of entanglement satisfy 
% \begin{equation}
%   2\leq l_{Ent}(\rho)\leq l_{Bipa}(\rho,G)\leq l_{GME}(\rho),
% \end{equation}
% for any bipartition $G\subset [n]$.
% \end{theorem}
% \begin{proof}
%   Given a state $\rho$, there always exists a fully separable state $\sigma=\rho_{\{1\}}\otimes\rho_{\{2\}}\otimes\cdots\otimes \rho_{\{n\}}$ which is compatible with $\rho$ for all $1$-length marginals, where $\rho_{\{i\}}=\tr_{[n]\setminus\{i\}}(\rho)$ is the reduce density matrix on marginal $\{i\}$. Hence the detection length for any entanglement forms or nonlocality is at least $2$. Moreover, given a genuinely entangled state, since it cannot be written as the convex combination of biseparable states under all possible bipartition, it must also be entangled under arbitrary bipartition $G$. According to the definition of detection length, we have $l_{Bipa}(\rho,G)\leq\l_{GME}(\rho)$ for arbitrary $G$. Similarly, any entangled state under bipartition $G$ must also be an entangled state, hence $l_{Ent}(\rho)\leq l_{Bipa}(\rho,G)$ for arbitrary $G$.
% \end{proof}
\begin{observation}\label{observation : rho}
  For an entangled state $\rho$, if there exists a subset $S$, such that the marginal state $\rho_S$ is also an entangled state, then $l_{Ent}(\rho)\leq |S|$. Specially, if $|S|=2$, then $l_{Ent}(\rho)=2$. 
\end{observation}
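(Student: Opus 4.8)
The plan is to unwind the definitions of the compatibility set and the entanglement detection length, and to exploit the fact that the global state and any state sharing enough marginals are ``linked'' through the marginal on $S$. First I would recall that $l_{Ent}(\rho)\le |S|$ means there is a set of subsets $\cS$ with $\max(\cS)\le |S|$ such that $\cC(\rho,\cS)$ contains only entangled states. The natural candidate is $\cS=\{S\}$ itself: I claim that if $\rho_S$ is entangled, then every $\sigma\in\cC(\rho,\{S\})$ is entangled, because $\sigma_S=\rho_S$ is entangled, and a fully separable $\sigma$ would have a fully separable marginal $\sigma_S$ (partial traces of product/separable states are separable), a contradiction. Hence $\cS=\{S\}$ detects $\rho$'s entanglement, so $\max(\cS)=|S|$ gives $l_{Ent}(\rho)\le |S|$.

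Next I would handle the ``specially'' clause. If $|S|=2$, the above argument already gives $l_{Ent}(\rho)\le 2$. For the matching lower bound I would invoke Eq.~\eqref{eq:entanglement}, which states $l_{Ent}(\rho)\ge 2$ for every entangled $\rho$ (the argument there being that $\cC(\rho,\cS_1)$ always contains the fully separable product of one-body marginals $\rho_{\{1\}}\otimes\cdots\otimes\rho_{\{n\}}$, so no family of singletons can detect entanglement). Combining $2\le l_{Ent}(\rho)\le 2$ yields $l_{Ent}(\rho)=2$.

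The only genuinely substantive point — and the one I would state carefully rather than wave at — is the monotonicity fact that the marginal of a fully separable state is fully separable on the remaining parties: if $\rho=\sum_i p_i \proj{\psi_i}$ with each $\ket{\psi_i}$ a product state $\ket{\alpha_i^{(1)}}\otimes\cdots\otimes\ket{\alpha_i^{(n)}}$, then $\rho_S=\sum_i p_i \bigotimes_{j\in S}\proj{\alpha_i^{(j)}}$, which is manifestly fully separable on $S$. Contrapositively, $\rho_S$ entangled forces $\rho$ entangled, and more to the point forces every $\sigma$ with $\sigma_S=\rho_S$ to be entangled. I do not anticipate a real obstacle here; the ``proof'' is essentially a definitional check, and the main care is simply to make sure the detecting family $\cS=\{S\}$ is an admissible choice in the minimization defining $l_{Ent}$ and that the lower bound $l_{Ent}(\rho)\ge 2$ is correctly imported from Eq.~\eqref{eq:entanglement}.
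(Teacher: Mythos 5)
Your proposal is correct and follows essentially the same route as the paper: take $\cS=\{S\}$, observe that any $\sigma\in\cC(\rho,\{S\})$ has $\sigma_S=\rho_S$ entangled and hence $\sigma$ is entangled (since fully separable states have fully separable marginals), giving $l_{Ent}(\rho)\leq|S|$, with the $|S|=2$ case settled by the lower bound $l_{Ent}(\rho)\geq 2$ from Eq.~\eqref{eq:entanglement}. You merely spell out the separability-of-marginals step that the paper leaves implicit.
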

\begin{proof}
    If $\rho_S$ is entangled, then the compatibility set $\cC(\rho,\{S\})$ contains only entangled states.  This means that  $\{S\}$ detects $\rho$'s entanglement, and $l_{Ent}(\rho)\leq |S|$. 
\end{proof}

% The proof of this proposition can be shown as follows. If reduced density matrix $\rho_S$ corresponds to an entangled state, then the compatibility set $\cC(\rho,\{S\})$ does not contain any fully separable states, which means $\{S\}$ detects $\rho$'s entanglement, and $l_{Ent}(\rho)\leq |S|$. 

\setcounter{proposition}{0}
\begin{proposition}
Let $\ket{\psi_n}=\frac{1}{\sqrt{2}}(\ket{100\cdots 0}+\ket{010\cdots 0})$, $\ket{\ghz_n}=\frac{1}{\sqrt{2}}(\ket{0\cdots 00}+\ket{1\cdots 11})$ and 
\begin{equation*}
\rho=p\ketbra{\psi_n}{\psi_n}+(1-p)\ketbra{{\ghz}_n}{{\ghz}_n},  
\end{equation*}
where $\frac{1}{2}<p<1$, then we have $l_{GME}(\rho)=n>l_{Ent}(\rho)=2$. And for any bipartition $G\subset [n]$, if $G$ contains only one of $1$ and $2$, $l_{Bipa}(\rho,G)=2$; otherwise $l_{Bipa}(\rho,G)=n$.
\end{proposition}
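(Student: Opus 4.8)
The plan rests on two structural facts. Since $p\in(0,1)$ and $\ket{\psi_n}\perp\ket{\ghz_n}$, the range of $\rho$ is the two‑dimensional space $R:=\mathrm{span}\{\ket{\psi_n},\ket{\ghz_n}\}\subseteq V:=\mathrm{span}\{\ket{0^n},\ket{1^n},\ket{10^{n-1}},\ket{010^{n-2}}\}$, and $\langle 0^n|\rho|1^n\rangle=\tfrac{1-p}{2}\neq 0$. Every pure state occurring in a biseparable (or $G|\overline G$‑biseparable) decomposition of $\rho$ must lie in $R$, so the whole argument reduces to the following. \textbf{Key Lemma:} if $\ket\phi=\alpha\ket{0^n}+\beta\ket{1^n}+\gamma\ket{10^{n-1}}+\delta\ket{010^{n-2}}\in V$ is biproduct with respect to some bipartition $G|\overline G$ ($G$ a nonempty proper subset of $[n]$), then $\alpha=0$ or $\beta=0$; consequently, since every vector of $R$ has equal $\ket{0^n}$‑ and $\ket{1^n}$‑amplitudes, every biproduct vector of $R$ is proportional to $\ket{\psi_n}$.

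To prove the Key Lemma I would split into cases according to whether qubits $1$ and $2$ lie on the same side of the cut or are separated, and compute the Schmidt rank of $\ket\phi$ across $G|\overline G$. Writing each of the four basis kets as (a string on one side)$\otimes$(a string on the other side) and grouping $\ket\phi$ by the basis strings on the side not containing both of $\{1,2\}$, one finds that only the all‑zeros string, the all‑ones string, and (when the cut separates $1,2$) at most one weight‑one string can occur there; these are distinct computational basis vectors, hence orthonormal, so the Schmidt rank equals the dimension of the span of the corresponding coefficient vectors on the other side. The $\ket{1^n}$‑branch forces the all‑ones string, whose Hamming weight strictly exceeds those of the all‑zeros and weight‑one strings carried by the $\ket{0^n}$‑branch, so if $\alpha\neq0$ and $\beta\neq0$ these coefficient vectors are linearly independent and the Schmidt rank is $\ge 2$, contradicting biproductness. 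The borderline subcase $\overline G=\{2\}$ (or $\{1\}$) is handled directly: biproductness forces the two qubit‑conditioned vectors on the $(n-1)$‑qubit side to be proportional, again impossible unless $\alpha=0$ or $\beta=0$; and when $|G|=1$ the all‑ones string coincides with a weight‑one string, but the $\ket{0^n}$‑branch still carries a $\ket{0}$‑component the $\ket{1^n}$‑branch lacks, preserving independence. This bipartition case analysis is where the real work lies; everything else is bookkeeping built on the coherence invariant $\langle 0^n|\rho|1^n\rangle\neq 0$.

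Granting the Key Lemma: if $\rho=\sum_i q_i\proj{\phi_i}$ with each $\ket{\phi_i}$ biproduct, then each $\ket{\phi_i}\in R$ is proportional to $\ket{\psi_n}$, forcing $\rho=\proj{\psi_n}$, which contradicts $p<1$; hence $\rho$ is GME, and the same argument with a fixed cut shows $\rho$ is entangled on every $G|\overline G$. In particular $\cS=\{[n]\}$ trivially certifies both, giving $l_{GME}(\rho)\le n$ and $l_{Bipa}(\rho,G)\le n$ for all $G$. For the matching lower bounds I would introduce
\begin{equation*}
\sigma:=p\proj{\psi_n}+\tfrac{1-p}{2}\bigl(\proj{0^n}+\proj{1^n}\bigr),
\end{equation*}
which is biseparable — a mixture of $\ket{\psi_n}$ (biproduct on $\{1,2\}|\overline{\{1,2\}}$, and in fact on every $G$ not splitting $\{1,2\}$) and the fully product states $\ket{0^n},\ket{1^n}$ — and which satisfies $\sigma_S=\rho_S$ for every proper $S\subsetneq[n]$, since the off‑diagonal $\ketbra{0^n}{1^n}$ of $\proj{\ghz_n}$ vanishes under any nontrivial partial trace and $(\proj{\ghz_n})_S=\tfrac12(\proj{0_S}+\proj{1_S})=\bigl(\tfrac12(\proj{0^n}+\proj{1^n})\bigr)_S$. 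Thus $\sigma\in\cC(\rho,\cS)$ whenever $\max(\cS)\le n-1$, so no such $\cS$ detects GME, nor detects $G|\overline G$‑entanglement when $\{1,2\}\subseteq G$ or $\{1,2\}\subseteq\overline G$; with the upper bounds this gives $l_{GME}(\rho)=n$ and $l_{Bipa}(\rho,G)=n$ in those cases.

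It remains to prove the ``$=2$'' assertions. A direct computation gives $\rho_{12}=\tr_{\{3,\dots,n\}}\rho=p\proj{\psi^+}+\tfrac{1-p}{2}(\proj{00}+\proj{11})$ with $\ket{\psi^+}=\tfrac1{\sqrt2}(\ket{10}+\ket{01})$; its partial transpose contains the $2\times2$ block $\left(\begin{smallmatrix}\frac{1-p}{2}&\frac{p}{2}\\ \frac{p}{2}&\frac{1-p}{2}\end{smallmatrix}\right)$, whose eigenvalue $\tfrac{1-2p}{2}$ is negative because $p>\tfrac12$, so $\rho_{12}$ is NPT, hence entangled and entangled across $1|2$. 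Observation~\ref{observation : rho} then yields $l_{Ent}(\rho)=2$. If $G$ contains exactly one of $1,2$, any state biseparable on $G|\overline G$ has, after tracing out $[n]\setminus\{1,2\}$, a marginal on $\{1,2\}$ separable across $1|2$ (a $G$–$\overline G$ product restricts to a $\{1\}$–$\{2\}$ product); since $\rho_{12}$ is not such a state, $\cC(\rho,\{\{1,2\}\})$ consists only of states entangled on $G|\overline G$, so $l_{Bipa}(\rho,G)\le 2$, and the universal lower bound $2$ from Eq.~\eqref{eq:entanglement} forces equality. Collecting the four cases proves the proposition; the one genuinely hard step is the Key Lemma's bipartition analysis, after which all detection‑length claims follow from the invariant $\langle 0^n|\rho|1^n\rangle\neq 0$, the compatible biseparable state $\sigma$, and the NPT marginal $\rho_{12}$.
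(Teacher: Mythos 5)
Your proof is correct and follows essentially the same route as the paper's: the range argument showing the only biproduct vectors in $\mathrm{span}\{\ket{\psi_n},\ket{\ghz_n}\}$ are multiples of $\ket{\psi_n}$ (hence $\rho$ is genuinely entangled), the biseparable state $\sigma=p\proj{\psi_n}+\tfrac{1-p}{2}\bigl(\proj{0\cdots0}+\proj{1\cdots1}\bigr)$ matching all $(n-1)$-body marginals for the lower bounds, and the NPT two-body marginal $\rho_{\{1,2\}}$ together with Observation 1 for the ``$=2$'' claims. The only substantive difference is that you spell out, via the Schmidt-rank case analysis of your Key Lemma and the marginal-separability step for cuts splitting $\{1,2\}$, details that the paper merely asserts.
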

\begin{proof}
When $0<p<1$, the range $\cR(\rho)$ is spanned by $\{\ket{\psi_n}, \ket{\ghz_n}\}$, and $\dim(\cR(\rho))=2$. Since all the biproduct states in $\cR(\rho)$ are $\{a\ket{\psi_n}\mid |a|=1, a\in \bbC \}$, then $\rho$ is a genuinely entangled state. Since
 $\cC(\rho, \cS_{n-1})$ contains a biseparable state $\sigma=p\ketbra{\psi_n}{\psi_n}+\frac{1-p}{2}\ketbra{0\cdots 00}{0\cdots 00}+\frac{1-p}{2}\ketbra{1\cdots 11}{1\cdots 11}$, we have  $l_{GME}(\rho)=n$.
  Note that $\rho_{\{1,2\}}=\frac{p}{2}(\ket{01}+\ket{10})(\bra{01}+\bra{10})+\frac{1-p}{2}\ketbra{00}{00}+\frac{1-p}{2}\ketbra{11}{11}$. When $\frac{1}{2}<p<1$, $\rho_{\{1,2\}}$ is NPT, and $\rho_{\{1,2\}}$ is an entangled state. By Observation \ref{observation : rho}, we have $l_{Ent}(\rho)=2$. Similarly, since the state $\sigma$ contained in $\cC(\rho,\cS_{n-1})$ is biseparable under bipartition $G$ if $G$ contains both particles $1$ and $2$ or neither (these two situations are equivalent), we have $l_{Bipa}(\rho,G)=n$. In the other hand, since $\rho_{\{1,2\}}=\frac{p}{2}(\ket{01}+\ket{10})(\bra{01}+\bra{10})+\frac{1-p}{2}\ketbra{00}{00}+\frac{1-p}{2}\ketbra{11}{11}$ is NPT when $\frac{1}{2}<p<1$, $\rho$ is entangled under bipartition $G$ in this case if $G$ contain exactly one of $1$ and $2$, we have $l_{Bipa}(\rho,G)=2$ under this bipartition $G$.
\end{proof}

\begin{proposition}
Denote the $n$-qubit state with two parameters \begin{equation}
  \rho_n^*(\alpha,\theta)=\begin{pmatrix}
    \gamma(t(0)) &0 &\cdots &(\alpha cs)^n\\
    0 &\gamma(t(1)) &\cdots &0\\
    \vdots &\vdots &\ddots &\vdots\\
    (\alpha cs)^n &0 &\cdots &\gamma(t(2^n))
  \end{pmatrix}
\end{equation}
where $c=\cos\theta,s=\sin\theta,
  \gamma(i)=\frac{c^{2(n-i)}s^{2i}}{2^n}\left((1+\alpha)^{n-i}(1-\alpha)^i+(1+\alpha)^{i}(1-\alpha)^{n-i}\right),$
and the function $t(k)$ represents the number of $1$ of the bit-string $k$ in the binary form. With parameters $\alpha\geq1-1/N^2$ and $\theta>0$, the GME detection length satisfies $l_{GME}(\rho_n^*(\alpha,\theta))=n$. 
\end{proposition}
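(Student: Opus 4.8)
\textbf{Proof proposal for Proposition \ref{prop:GME_LHV}.}

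The plan is to establish the two inequalities $l_{GME}(\rho_n^*(\alpha,\theta)) \le n$ and $l_{GME}(\rho_n^*(\alpha,\theta)) \ge n$ separately. The upper bound $\le n$ is immediate: every state has GME detection length at most $n$, since the full set $\cS_n = \{[n]\}$ trivially determines the state exactly, so its compatibility set is the singleton $\{\rho_n^*\}$; one only needs $\rho_n^*$ itself to be genuinely entangled, which follows from the structure given by Bowles \emph{et al.} \cite{Bowles2016GME_LHV}, where $\rho_n^*(\alpha,\theta)$ is constructed precisely as a GME state (for $\theta>0$ and $\alpha$ in the stated range). So the real content is the lower bound: one must show that for \emph{every} collection $\cS$ of subsets of $[n]$ with $\max(\cS) \le n-1$, the compatibility set $\cC(\rho_n^*,\cS)$ contains a biseparable state. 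Since $\cC(\rho_n^*,\cS) \supseteq \cC(\rho_n^*,\cS_{n-1})$ whenever $\max(\cS)\le n-1$, it suffices to exhibit a single biseparable $\sigma \in \cC(\rho_n^*,\cS_{n-1})$ — i.e. a biseparable state sharing all $(n-1)$-party marginals with $\rho_n^*$.

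The key step, then, is to construct that witness state $\sigma$. I would exploit the explicit matrix form: $\rho_n^*$ is supported on the $2^n$ computational basis states with nonzero entries only on the diagonal ($\gamma(t(k))$ depending solely on the Hamming weight $t(k)$) and on the two anti-diagonal corners connecting $\ket{0\cdots0}$ and $\ket{1\cdots1}$ (value $(\alpha cs)^n$). This is a "GHZ-like" structure: a diagonal part plus an $X$-type coherence between the all-zeros and all-ones strings. The natural candidate is to kill the single off-diagonal coherence — replace $\rho_n^*$ by $\sigma := \rho_n^* - (\alpha cs)^n\big(\ketbra{0\cdots0}{1\cdots1} + \ketbra{1\cdots1}{0\cdots0}\big)$, which is now diagonal in the computational basis, hence fully separable (in particular biseparable). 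One must check two things: (i) $\sigma \ge 0$, which holds because removing the corner coherence from a PSD matrix of this block form leaves the $2\times 2$ corner block diagonal and the rest untouched — provided $\gamma(0)\gamma(t(2^n))\ge (\alpha cs)^{2n}$, i.e. the original corner block was already PSD, which it is since $\rho_n^*\ge 0$; and (ii) $\sigma_S = (\rho_n^*)_S$ for every $S \in \cS_{n-1}$. Claim (ii) is the crux: the perturbation $\ketbra{0\cdots0}{1\cdots1}$ is the tensor product $\ketbra{0}{1}^{\otimes n}$, and tracing out any single party $j$ annihilates it because $\tr(\ketbra{0}{1}) = 0$ on that factor. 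Hence all $(n-1)$-party marginals of $\sigma$ and $\rho_n^*$ coincide, so $\sigma \in \cC(\rho_n^*,\cS_{n-1})$, giving $l_{GME}(\rho_n^*) > n-1$, i.e. $l_{GME}(\rho_n^*) = n$.

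I expect the main obstacle to be verifying positivity of $\sigma$ rigorously and confirming that $\rho_n^*(\alpha,\theta)$ is genuinely entangled for the stated parameter range (the $\alpha \ge 1 - 1/N^2$ condition and $\theta>0$) — these rely on the detailed construction in \cite{Bowles2016GME_LHV} rather than on anything elementary, and one should be careful that the normalization $\sum_k \gamma(t(k)) = 1$ and the PSD constraint $\gamma(0)\gamma(t(2^n)) \ge (\alpha cs)^{2n}$ genuinely hold across the whole parameter region. A secondary subtlety is making explicit why $\max(\cS)\le n-1$ lets us pass to $\cS_{n-1}$: any marginal on a set of size $\le n-1$ is obtained from an $(n-1)$-party marginal by further partial trace, so agreement on all of $\cS_{n-1}$ forces agreement on $\cS$; this monotonicity of compatibility sets should be stated cleanly. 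Once $\sigma$ is in hand, the argument is short.
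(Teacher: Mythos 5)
Your proposal is correct and follows essentially the same route as the paper: the paper also takes genuine entanglement of $\rho_n^*(\alpha,\theta)$ from the positivity of its concurrence as computed in Bowles \emph{et al.}, and exhibits exactly your witness state — the diagonal part of $\rho_n^*$, obtained by deleting the corner coherences — as a fully separable member of $\cC(\rho_n^*,\cS_{n-1})$, forcing $l_{GME}=n$. Your extra checks (marginal agreement via $\tr\ketbra{0}{1}=0$ on any traced-out factor, and positivity, which is automatic for a diagonal matrix with nonnegative entries) are correct and only make explicit what the paper leaves implicit.
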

\begin{proof}
  With $\alpha\geq1-1/N^2$ and $\theta>0$, the entanglement concurrence of this state satisfies \cite{Bowles2016GME_LHV} \begin{equation}
    C(\rho^*_{n}(\alpha,\theta):=\frac{2\sin^n(2\theta)\left(\alpha^n+\frac{1+\alpha}{2}^n+\frac{1-\alpha}{2}^n-1\right)}{(1+\alpha\cos 2\theta)^n+(1-\alpha\cos 2\theta)^n}>0,
  \end{equation}
  which means $\rho^*_{n}(\alpha,\theta)$ is genuinely entangled state. Since $\cC(\rho^*_{n}(\alpha,\theta), \cS_{n-1})$ contains a fully separable state \begin{equation}
    \sigma=\begin{pmatrix}
    \gamma(t(0)) &0 &\cdots &0 \\
    0 &\gamma(t(1)) &\cdots &0\\
    \vdots &\vdots &\ddots &\vdots\\
    0 &0 &\cdots &\gamma(t(2^n))
  \end{pmatrix},
  \end{equation}
   the state $\rho^*_{n}(\alpha,\theta)$ with parameters $\alpha\geq1-1/N^2$ and $\theta>0$ has GME detection length $l_{GME}(\rho^*_{n}(\alpha,\theta))=n$.
\end{proof}

\begin{proposition}
Under the experimental environment of global depolarizing channel with parameter $p$ and measurement bit-flip error with probability $\epsilon$, the expectation value of a $k$-length witness $W$, which satisfies $\tr(W)=d$, is given by
    \begin{equation}
\begin{split}
  \alpha^*(\rho)&=1-(1-2\epsilon)^k(1-(1-p)\tr(W\rho)-p),
\end{split}
\end{equation}
indicating that the noise tolerance parameter of $W$ witness for $\rho$ is determined as \begin{equation}
    p^*:=1-1/\left((1-2\epsilon)^k(1-\tr(W\rho))\right).
\end{equation} 
\end{proposition} 
\begin{proof}
    For an EW with preceding normalization restriction $\tr(W)=d$, it can be written as $W=\mathbb{I}-\sum_i P_i$, where $P_i$ is a non-identity $k$-length Pauli operator. When we are deriving the expactation value of $W$, we only measure these $P_i$ terms. Consequently, if we want to analyze the influence of bit-flip error on measurement outcome, we only need to consider these $P_i$ terms. With bit-flip error of probability $\epsilon$, each $k$-length measurement would read out an opposite outcome with the probability of $w:=\sum_{i<k,i\bmod2= 1}\binom{k}{i}\epsilon^{i}(1-\epsilon)^{k-i}$. Assume that the probability of $+1$ result is $a$ and $-1$ as $(1-a$), indicating the original expectation value equal to $a-(1-a)=2a-1$. With the bit-flip error on measurement, the expectation value for a $k$-length measurement is modified to:
\begin{equation}
  a\cdot(1-w)+a\cdot (-1)\cdot w+(1-a)\cdot(-1)\cdot(1-w)+(1-a)\cdot w=(1-2w)(2a-1),
\end{equation}
can be seen as shrinking a coefficient \begin{equation}
\begin{split}
  1-2w&=(1-\epsilon+\epsilon)^k-2\sum_{\text{odd }i}\binom{k}{i}\epsilon^{i}(1-\epsilon)^{k-i}\\
  &=\sum_{\text{even }i}\binom{k}{i}\epsilon^{i}(1-\epsilon)^{k-i}-\sum_{\text{odd }i}\binom{k}{i}\epsilon^{i}(1-\epsilon)^{k-i}\\
  &=\sum_{\text{even }i}\binom{k}{i}(-\epsilon)^{i}(1-\epsilon)^{k-i}+\sum_{\text{odd }i}\binom{k}{i}(-\epsilon)^{i}(1-\epsilon)^{k-i}\\
  &=(1-2\epsilon)^k.
\end{split}
\end{equation}
    Hence, the expectation value for a $k$-length EW is given by \begin{equation}\label{alphaBar}
      \begin{split}
      \overline{\alpha}(\rho)&=\tr(\rho)-(1-2\epsilon)^k\tr(P\rho)\\
      &=1-(1-2\epsilon)^k(1-\tr(W\rho)),
      \end{split}
    \end{equation}
    under solely the local bit-flip measurement error. Then substituting the expectation value of the state mixed with global depolarizing noise in Eq.~\eqref{eq:noise_form} into this form, we could complete the calculation of $\alpha^*$. 
\end{proof}

\section{Detection length for entanglement depth and intactness}\label{appen:ent_dep}
Based on the analytical model proposed previously, we can further analyze the proposition for the detection length for entanglement depth and entanglement intactness.
\subsection{Entanglement depth}
For a pure state $\ket{\Psi}$ which is the tensor product of multi-particle quantum states \begin{equation}
  \ket{\Psi}=\ket{\phi_1}\otimes\ket{\phi_2}\otimes\cdots\otimes\ket{\phi_n},
\end{equation}
it is said to be $k$-producible if all $\ket{\phi_i}$ are states of at most $k$ particles. A mixed state is called $k$-producible if it is a convex combination of pure states that are all at most $k$-producible. 
We say a quantum state has an entanglement depth $k$, if it is $k$-producible but not $(k-1)$-producible. 
Based on this definition, we say that a state $\rho$ has $\mathcal{S}$-detectable $k$-depth entanglement if the compatibility set $\mathcal{C}(\rho,\mathcal{S})$ contains only states having $k'$-depth entanglement with $k'\geq k$, i.e. at least having $k$-depth entanglement. When in this case, we say that $\mathcal{S}$ detects $\rho$'s $k$-depth entanglement.
Similar to GME detection length, we define the $k$-depth entanglement detection length of state $\rho$ as \begin{equation}
  l_{dep}^k(\rho)=\min_{\mathcal{S}}\{\max(\cS) \mid \text{$\mathcal{S}$ detects $\rho$'s $k$-depth entanglement}\}.
\end{equation}
And the minimum number of $l^k_{dep}$-body marginals needed to detect $k$-depth entanglement $m^k_{dep}$ as \begin{equation}
  m^k_{dep}(\rho)=\min_{\mathcal{S}:|S|=l_{dep}^k(\rho),\forall S\in\mathcal{S}}\{|\mathcal{S}|\mid\mathcal{S}\text{ detects $\rho$'s $k$-depth entanglement}\}.
\end{equation}
\begin{proposition}
For a pure state $\ket{\Psi}$ with $k$-depth entanglement $\ket{\Psi}=\ket{\phi_1}\otimes\ket{\phi_2}\otimes\cdots\otimes\ket{\phi_n}$, let the $D=\{\ket{\phi_i}\mid |\phi_i|=k\}$ denotes the set of $k$ particles inseparable states, the values of the $k$-depth entanglement detection length $l_{dep}^k(\ket{\Psi})$ and the corresponding minimum $m_{dep}^k(\ket{\Psi})$ satisfy \begin{equation}
  \begin{split}
    l_{dep}^k(\ket{\Psi})&=\min_{\ket{\phi_i}\in D}l_{GME}(\ket{\phi_i})\in[2,k],\\
    m_{dep}^k(\ket{\Psi})&=m_{GME}(\arg\min_{\ket{\phi_i}\in D}l_{GME}(\ket{\phi_i})).
  \end{split}
\end{equation}
\end{proposition}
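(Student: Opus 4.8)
The plan is to reduce the statement about $k$-depth entanglement of the product state $\ket{\Psi}=\ket{\phi_1}\otimes\cdots\otimes\ket{\phi_n}$ to statements about the GME detection lengths of its genuinely entangled tensor factors, by carefully tracking what the compatibility set $\cC(\ket{\Psi},\cS)$ can contain. The key observation is that $\ket{\Psi}$ has entanglement depth exactly $k$ precisely because at least one factor $\ket{\phi_i}$ lives on exactly $k$ particles and is genuinely entangled (inseparable); the set $D$ collects all such ``bottleneck'' factors. Intuitively, detecting $k$-depth entanglement of $\ket{\Psi}$ requires certifying, from marginals, that at least one block of $k$ particles carries a genuinely entangled state — and the cheapest such block to certify is the one whose GME detection length is smallest, which is $\min_{\ket{\phi_i}\in D} l_{GME}(\ket{\phi_i})$.

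First I would establish the upper bound $l_{dep}^k(\ket{\Psi})\le \min_{\ket{\phi_i}\in D} l_{GME}(\ket{\phi_i})$. Pick the factor $\ket{\phi_{i^*}}\in D$ achieving the minimum, supported on a particle set $P_{i^*}$ with $|P_{i^*}|=k$. Take $\cS$ to be an optimal set of subsets (all contained in $P_{i^*}$) that detects the GME of $\ket{\phi_{i^*}}$, so $\max(\cS)=l_{GME}(\ket{\phi_{i^*}})$. Any $\sigma\in\cC(\ket{\Psi},\cS)$ has the same marginals as $\ket{\Psi}$ on every $S\in\cS$; since these $S$ all lie inside $P_{i^*}$ and $\ket{\Psi}$ factorizes, $\sigma_{P_{i^*}}\in\cC(\ket{\phi_{i^*}},\cS)$, hence $\sigma_{P_{i^*}}$ is genuinely entangled on the $k$ particles $P_{i^*}$. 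A standard fact about producibility is that if a marginal on $k$ particles is genuinely $k$-partite entangled, the global state cannot be $(k-1)$-producible, so $\sigma$ has entanglement depth at least $k$. Thus $\cS$ detects $\ket{\Psi}$'s $k$-depth entanglement and the upper bound follows; moreover this construction shows $m_{dep}^k(\ket{\Psi})\le m_{GME}(\ket{\phi_{i^*}})$.

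Next comes the lower bound, which I expect to be the main obstacle. I need to show that no $\cS$ with $\max(\cS) < \min_{\ket{\phi_i}\in D} l_{GME}(\ket{\phi_i})$ can detect $k$-depth entanglement. The idea is to construct, for any such $\cS$, a state $\sigma\in\cC(\ket{\Psi},\cS)$ that is $(k-1)$-producible. For each bottleneck factor $\ket{\phi_i}\in D$, because $\max(\cS)<l_{GME}(\ket{\phi_i})$, the set of subsets of $\cS$ restricted to the support $P_i$ fails to detect GME of $\ket{\phi_i}$, so there is a biseparable (hence $(k-1)$-producible) state $\tau_i$ on $P_i$ agreeing with $\ket{\phi_i}$ on all those marginals. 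Replacing each $\ket{\phi_i}\bra{\phi_i}$ (for $\ket{\phi_i}\in D$) by $\tau_i$ and leaving the other factors (which are already at most $(k-1)$-particle, hence $(k-1)$-producible) untouched gives a global state $\sigma$ that is $(k-1)$-producible. The delicate point is verifying $\sigma\in\cC(\ket{\Psi},\cS)$: for a subset $S\in\cS$, the marginal $\sigma_S$ is a tensor product over the factors it meets; on the $D$-factors the marginal $\tau_i$ restricted to $S\cap P_i$ matches $(\ket{\phi_i})_{S\cap P_i}$ by choice of $\tau_i$, and on the other factors it matches trivially — but one must check that $S\cap P_i$ really is one of the ``restricted'' subsets the GME-non-detection argument covers (it has size $\le \max(\cS) < l_{GME}(\ket{\phi_i})$, so it is). Finally, one argues the lower bound on $m_{dep}^k$: when $l_{dep}^k(\ket{\Psi})$ is attained, the detecting $\cS$ must, on the cheapest factor, reduce to a GME-detecting family of that same length for $\arg\min_{\ket{\phi_i}\in D} l_{GME}(\ket{\phi_i})$, forcing $|\cS|\ge m_{GME}$ of that factor; combined with the upper bound this pins down $m_{dep}^k$ exactly. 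I would state the elementary producibility fact (genuinely $k$-entangled $k$-particle marginal $\Rightarrow$ not $(k-1)$-producible) as a small lemma, since it is the one ingredient not already in the excerpt.
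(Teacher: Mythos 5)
The paper states this proposition in Appendix B without any accompanying proof, so there is no argument of the paper to compare yours against; judged on its own, your strategy is sound and supplies exactly the missing argument. The upper bound (pick the cheapest factor $\ket{\phi_{i^*}}\in D$ on particles $P_{i^*}$, pull back an optimal GME-detecting family $\cS$ of subsets of $P_{i^*}$, and note that any $\sigma\in\cC(\ket{\Psi},\cS)$ satisfies $\sigma_{P_{i^*}}\in\cC(\ket{\phi_{i^*}},\cS)$, hence is genuinely $k$-partite entangled on $P_{i^*}$) is correct, and the lemma you isolate --- a $(k-1)$-producible state has only biseparable $k$-body marginals, since each pure $(k-1)$-producible component restricted to $k$ particles necessarily splits across at least two blocks --- is true and is indeed the one ingredient that has to be proved separately. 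The lower bound by block-wise replacement is also correct: if $\max(\cS)<\min_{\ket{\phi_i}\in D}l_{GME}(\ket{\phi_i})$, then each restricted family $\{S\cap P_i\}_{S\in\cS}$ has sets of size below $l_{GME}(\ket{\phi_i})$ and so fails to detect GME of $\ket{\phi_i}$, each $D$-factor can be swapped for a compatible biseparable $\tau_i$, and the tensor product of the $\tau_i$ with the untouched (at most $(k-1)$-particle) factors is a $(k-1)$-producible element of $\cC(\ket{\Psi},\cS)$.

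Two points need tightening. First, you implicitly assume the factorization is the finest one, so that every $k$-particle factor is genuinely entangled and every factor outside $D$ has at most $k-1$ particles; state this explicitly (it is what the proposition means by ``inseparable''). Second, the $m_{dep}^k$ lower bound is only gestured at: a detecting family $\cS$ whose sets all have size $l=l_{dep}^k(\ket{\Psi})$ need not literally restrict to a GME-detecting family of sets of size exactly $l$ on your chosen factor. What your replacement argument actually yields is that the restriction to \emph{some} factor $\ket{\phi_j}$ with $l_{GME}(\ket{\phi_j})\le l$ (hence $=l$, so a minimizer) detects its GME; the restricted sets $S\cap P_j$ may be strictly smaller than $l$, so you must pad them to size $l$ inside $P_j$ (enlarging sets only shrinks the compatibility set, and $l\le k$), giving $m_{GME}(\ket{\phi_j})\le|\cS|$. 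If several factors attain the minimum of $l_{GME}$, this pins down $m_{dep}^k$ as the smallest $m_{GME}$ among the minimizers, which is the natural reading of the $\arg\min$ in the statement and should be said.
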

Note that these values are directly correlated to the structure of the $k$-depth entangled part of $\ket{\Psi}$. For example if one part of $\ket{\Psi}$ is $k$-qubit $W$ state, then $l_{dep}^k(\ket{\Psi})=l_{GME}(\ket{W_k})=2$. Therefore, this proposition is not applicable to $k$-depth mixed states.

\subsection{Entanglement intactness}
A pure state $\ket{\Psi}$ which can be written as the tensor product of $s$ pure state \begin{equation}
  \ket{\Psi}=\ket{\phi_1}\otimes\ket{\phi_2}\otimes\cdots\otimes\ket{\phi_s}
\end{equation}
is said to be $s$-separable. A mixed state is called $s$-separable if it is a convex combination of pure states that are all at least $s$-separable. We say a quantum state has an entanglement intactness of $s$ if it is $s$-separable but not $(s+1)$-separable. 
Like the $k$-depth entanglement part, we say that $\mathcal{S}$ detects $\rho$'s $s$-intactness entanglement if the compatibility set $\mathcal{C}(\rho,\mathcal{S})$ contains only states having $s$-intactness entanglement. Similarly define the $s$-intactness entanglement detection length and the minimum number of the corresponding marginals of state $\rho$ as \begin{equation}
  \begin{split}
  l_{int}^s(\rho)&=\min_{\mathcal{S}}\{\max_{S\in\mathcal{S}}|S| \mid \text{$\mathcal{S}$ detects $\rho$'s $s$-intactness entanglement}\},\\
  m^s_{int}(\rho)&=\min_{\mathcal{S}:|S|=l_{int}^s(\rho),\forall S\in\mathcal{S}}\{|\mathcal{S}|\mid\mathcal{S}\text{ detects $\rho$'s $s$-intactness entanglement}\}.
  \end{split}
\end{equation}

\begin{proposition}
For a pure state $\ket{\Psi}$ with $s$-intactness entanglement $\ket{\Psi}=\ket{\phi_1}\otimes\ket{\phi_2}\otimes\cdots\otimes\ket{\phi_s}$, the values of the $s$-intactness entanglement detection length $l_{int}^s(\ket{\Psi})$ and the corresponding minimum $m_{int}^s(\ket{\Psi})$ satisfy \begin{equation}
  \begin{split}
    l_{int}^s(\ket{\Psi})&=\max_i l_{GME}(\ket{\phi_i}),\\
    m_{int}^s(\ket{\Psi})&\in\left[\sum_i\lceil (|\psi_i|-1)/(l_{int}^s-1)\rceil,\sum_im_{GME}(\ket{\phi_i})\right].
  \end{split}
\end{equation}
\end{proposition}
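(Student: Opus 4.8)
The plan is to establish the two claims separately, reducing everything to the GME detection-length results already available for the individual factors $\ket{\phi_i}$. Throughout, write $\ket{\Psi}=\bigotimes_{i=1}^s\ket{\phi_i}$ with $\ket{\phi_i}$ genuinely entangled on its own block of parties $B_i$, so the $B_i$ partition $[n]$, and $s$-separability forces precisely this partition (any finer split would contradict GME of some $\ket{\phi_i}$). First I would prove the formula $l_{int}^s(\ket{\Psi})=\max_i l_{GME}(\ket{\phi_i})$. For the lower bound: suppose $\cS$ detects $\rho$'s $s$-intactness, and fix the index $i$ achieving the maximum. Consider a compatible state obtained by replacing $\ket{\phi_i}\!\bra{\phi_i}$ with a biseparable (within $B_i$) state $\tau_i$ whose marginals on every subset of $B_i$ of size $<l_{GME}(\ket{\phi_i})$ agree with those of $\ket{\phi_i}$ — such $\tau_i$ exists by the very definition of $l_{GME}(\ket{\phi_i})$ as a minimum. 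Tensoring $\tau_i$ with the other $\ket{\phi_j}\!\bra{\phi_j}$ gives a state in $\cC(\rho,\cS)$ whenever $\max(\cS)<l_{GME}(\ket{\phi_i})$; this state is $(s+1)$-separable, contradicting detection. Hence $\max(\cS)\ge\max_i l_{GME}(\ket{\phi_i})$. For the upper bound: take for each $i$ a witnessing family $\cS^{(i)}$ of subsets of $B_i$ with $\max(\cS^{(i)})=l_{GME}(\ket{\phi_i})$ that detects $\ket{\phi_i}$'s GME; the union $\cS=\bigcup_i\cS^{(i)}$ has $\max(\cS)=\max_i l_{GME}(\ket{\phi_i})$, and I claim it detects $\rho$'s $s$-intactness. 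Indeed, any $\sigma\in\cC(\rho,\cS)$ has $\sigma_{B_i}\in\cC(\ket{\phi_i},\cS^{(i)})$ for each $i$; since the global marginals fix each $\sigma_{B_i}$ to be genuinely entangled, $\sigma$ cannot be split across any $B_i$, so it is at most $s$-separable, i.e. has intactness exactly $s$.

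Next I would handle the bounds on $m_{int}^s(\ket{\Psi})$. The upper bound $m_{int}^s\le\sum_i m_{GME}(\ket{\phi_i})$ follows from the same union construction: for each $i$ pick a minimal detecting family of $l_{GME}(\ket{\phi_i})$-subsets of size $m_{GME}(\ket{\phi_i})$, pad those of smaller-$l_{GME}$ blocks up to size $l_{int}^s$ if needed (padding with extra parties from the same $B_i$, or noting smaller subsets are still allowed if we only require $\max(\cS)=l_{int}^s$), and take the union; one must check padding does not destroy the detection property, which is immediate since enlarging a subset only adds constraints. The lower bound $m_{int}^s\ge\sum_i\lceil(|\phi_i|-1)/(l_{int}^s-1)\rceil$ is a counting/covering argument: within each block $B_i$, the subsets of $\cS$ restricted to $B_i$ must have sizes $\le l_{int}^s$ and must ``cover'' $B_i$ in a connectivity sense — if fewer than $\lceil(|B_i|-1)/(l_{int}^s-1)\rceil$ such subsets are used, their union of pairwise overlaps cannot connect all $|B_i|$ parties into one component, and then one can split $B_i$ along an uncovered cut, producing a compatible $(s+1)$-separable state. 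This is essentially the standard hypergraph-connectivity bound: to connect $|B_i|$ vertices using hyperedges of size at most $l_{int}^s$ one needs at least $\lceil(|B_i|-1)/(l_{int}^s-1)\rceil$ of them.

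The main obstacle I expect is the lower bound on $m_{int}^s$, specifically making rigorous the step ``if the restricted subsets do not connect $B_i$, there is a compatible state that factorizes across the cut.'' One has to argue that marginal constraints coming only from subsets lying entirely on one side or the other of a cut $B_i=T\sqcup(B_i\setminus T)$ can always be met by a product state $\xi_T\otimes\xi_{B_i\setminus T}$ — this requires that the original marginals of $\ket{\phi_i}$ on such subsets are themselves consistent with a product across the cut, which is true because $\rho_T$ is just the marginal of the chosen product pieces and can be realized independently on each side; some care is needed if subsets straddle the cut, but by hypothesis the unconnected case means no subset straddles a particular cut. A secondary subtlety is the asymmetry between the two $m$-bounds — they need not coincide, reflecting that GME-detecting families need not be ``minimally connected'' — so I would include a small example (e.g. one block a $W$ state where $l_{GME}=2$ but $m_{GME}$ exceeds $n-1$) to show the interval is genuinely an interval. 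I would also remark, as the paper does for entanglement depth, that the argument uses purity of $\ket{\Psi}$ essentially (the block structure is canonical only for pure states), so the statement does not extend verbatim to mixed $s$-separable states.
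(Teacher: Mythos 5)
The paper states this proposition in Appendix~B without giving any proof, so there is nothing to compare your route against; judged on its own merits, your argument for the lower bound $l^s_{int}(\ket{\Psi})\ge\max_i l_{GME}(\ket{\phi_i})$ is sound (the all-$(l-1)$-subset family inside the extremal block fails to detect GME, and tensoring the resulting biseparable block state with the remaining $\ket{\phi_j}$ gives a compatible $(s+1)$-separable state). The genuine gap is in your upper bound. From $\sigma\in\cC(\ket{\Psi},\bigcup_i\cS^{(i)})$ you correctly conclude that every block marginal $\sigma_{B_i}$ is GME, but the next step --- ``hence $\sigma$ cannot be split across any $B_i$, so it is at most $s$-separable'' --- is false as an implication. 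Counterexample with two blocks of three qubits ($s=2$): $\sigma=\frac{1}{2}\ketbra{\ghz_3}{\ghz_3}_{B_1}\otimes\frac{\mathbb{I}}{8}\big|_{B_2}+\frac{1}{2}\frac{\mathbb{I}}{8}\big|_{B_1}\otimes\ketbra{\ghz_3}{\ghz_3}_{B_2}$ is $3$-separable (every pure component is a product of four factors), yet both block marginals equal $\frac{1}{2}\ketbra{\ghz_3}{\ghz_3}+\frac{1}{2}\frac{\mathbb{I}}{8}$, which is GME since the white-noise biseparability threshold of $\ghz_3$ is $3/7<1/2$. Your pigeonhole observation only shows each pure component of an $(s+1)$-separable decomposition splits \emph{some} block; in a mixture those components, which are biseparable on that block, can hide inside a marginal that is still GME. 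A correct proof of the upper bound must use more than block-wise GME detection --- e.g.\ that the chosen within-block marginals determine $\sigma_{B_i}$ uniquely to be $\ketbra{\phi_i}{\phi_i}$ (true for Bell pairs, or for $W$ states with all two-body marginals, but not a consequence of GME detection in general) --- so as it stands the equality $l^s_{int}=\max_i l_{GME}(\ket{\phi_i})$ is not established, and neither is $m^s_{int}\le\sum_i m_{GME}(\ket{\phi_i})$, which relies on the same union family detecting.

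Two further points on the $m^s_{int}$ lower bound. Your per-block necessity argument is the right mechanism (if no chosen subset, restricted to $B_i$, straddles a cut $T|B_i\setminus T$, then $(\phi_i)_T\otimes(\phi_i)_{B_i\setminus T}$ tensored with the other blocks is compatible and $(s+1)$-separable), but summing $\lceil(|B_i|-1)/(l^s_{int}-1)\rceil$ over blocks tacitly assumes the connecting hyperedges of different blocks come from distinct elements of $\cS$. Since $\cS$ may contain subsets of size $l^s_{int}$ straddling two blocks with intersections of size at least $2$ in each (possible once $l^s_{int}\ge 4$), one subset can be counted towards the connectivity of two blocks, and the additivity of your count needs a separate argument (or a reduction showing straddling subsets can be dispensed with); note also that the paper defines $m^s_{int}$ over families whose members all have size exactly $l^s_{int}$, which your padding remark addresses. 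Finally, the detection notion should be read, as in the depth case, as ``the compatibility set contains no $(s+1)$-separable state''; your parenthetical ``intactness exactly $s$'' is stronger than what marginal data can certify, since more-entangled compatible states can never be excluded.
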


\section{Entanglement witnesses construction via SDP}\label{appdix:SDP}

In the preceding text, we have advocated the use of SDP for constructing witnesses to detect entanglement and nonlocality, and have also demonstrated the adaptability of these methods for the detection of genuine multipartite entanglement (GME) or bipartitioned entanglement.

To detect genuine multipartite entanglement (GME) from a state $\rho$, given a marginal set $\mathcal{M}$, we can execute a similar SDP, which exploits the \emph{fully decomposable witness} to relax the constraints on GME detection \cite{jungnitsch2011taming,shi2024entanglement} \begin{equation}\label{SDP:GME}
  \begin{split}
    &\min \tr(W\rho)\\
    &\forall \ S\subset [n]\\
    &\exists \ P_S\geq0,Q_S\geq0\\
    &\text{s.t. }W=P_S+Q_S^{T_S},\tr(W)=d,\\
  &\ \ \ \ \ W=\sum_{M_j\in\mathcal{M}}H_{M_j}\otimes\mathbb{I}_{\overline{M_j}}.
  \end{split}
\end{equation}
The only difference between this and the previous SDP in Eq.~\eqref{SDP:ent} is that it requires the witness $W$ to be decomposable with respect to all non-trivial partition $S\in[n]$, rather than only one partition. In this scenario, any mixture of biseparable states with respect to various bipartitions, as well as mixtures of PPT states, cannot pass the SDP test; therefore, the operator $W$ obtained from the SDP constitutes a GME witness. When the minimum of the SDP with marginal set $\mathcal{M}$ is negative for state $\rho$, it provides an upper bound for the GME detection length $l_{GME}(\rho) \leq l(\mathcal{M})$.

% \subsection{SDP Simulation to detect bi-partitioned entanglement}
Similarly, when we focus on a given partition $G$, the SDP can be adapted to evaluate the bipartitioned entanglement detection length: \begin{equation}
  \begin{split}
    &\min \tr(W\rho)\\
    &\exists \ P\geq0,Q\geq0\\
    &\text{s.t. }W=P+Q^{T_G},\tr(W)=d,\\
  &\ \ \ \ \ W=\sum_{M_j\in\mathcal{M}}H_{M_j}\otimes\mathbb{I}_{\overline{M_j}},
  \end{split}
\end{equation}
where $T_G$ is the partial transpose of the given partition $G$.

\section{Noise analyses for local depolarizing noise}\label{appdix:local}
In this section, we present numerical results pertaining to the noise tolerance parameter of the witnesses for both entanglement and GME across various detection lengths and marginal settings, as illustrated in Table.~\ref{table:sdp_result_ent}.

\begin{table}[!htp]
\begin{tabular}{|c|c|c|c|}
\hline
Detected state $\rho$& Marginals $\cM$& $\Tilde{p}_{Ent}(W(\cM),\rho)$&$\Tilde{p}_{GME}(W(\cM),\rho)$\\
\hline
\multirow{4}{*}{$\ket{W_3}$}&\{\{12\}\}&\multirow{2}{*}{0.4518}&\#\\
\cline{2-2}\cline{4-4}
&\{\{12\},\{23\}\}& &0.1859\\
\cline{2-4}
&\{\{12\},\{23\},\{13\}\}&0.5101&0.3039\\
\cline{2-4}
&\{\{$123$\}\}&0.7904&0.5210~\cite{jungnitsch2011taming}\\
\hline
\multirow{2}{*}{$\ket{GHZ_3^*}$}&collection of all $2$-length marginals&\#&\#\\
\cline{2-4}
&\{\{$123$\}\}&0.8&0.5714~\cite{jungnitsch2011taming}\\
\hline
\multirow{5}{*}{$\ket{W_4}(\ket{Dicke_4^2})$}&any one $2$-length marginal&\multirow{2}{*}{0.2929 (2/5)}&\multirow{2}{*}{\#}\\
\cline{2-2}
&any two distinct $2$-length marginals& &\\
\cline{2-4}
&\{\{12\},\{23\},\{34\}\}&0.2929(2/5)&0.0696 (0.0946)\\
\cline{2-4}
&\{\{23\},\{34\},\{24\}\}&0.3139 (0.4939)&0.0920 (0.1293)\\
\cline{2-4}
&collection of all $2$-length marginals&0.3508 (0.5232)&0.1541 (0.3131)\\
\hline
\multirow{8}{*}{$\ket{Cluster_4}(\ket{Ring_4})$}&collection of all $2$-length marginals&\#(\#)&\#(\#)\\
\cline{2-4}
&any one distinct $3$-length marginals& \multirow{4}{*}{2/3 (2/3)}&\#(\#)\\
\cline{2-2}\cline{4-4}
&\{\{123\},\{124\}\}& &\# (1/3)\\
\cline{2-2}\cline{4-4}
&\{\{123\},\{134\}\}& &1/3 (\#)\\
\cline{2-2}\cline{4-4}
&\{\{124\},\{134\}\} and \{\{123\},\{234\}\}& & 1/3 (1/3)\\
\cline{2-4}
&any three distinct $3$-length marginals&\multirow{2}{*}{4/5 (4/5)}&2/5(2/5)\\
\cline{2-2}\cline{4-4}
&collection of all $3$-length marginals& &1/2(1/2)\\
\cline{2-4}
&\{\{1234\}\}&8/9 (8/9)& 0.6154~\cite{jungnitsch2011taming} (0.6154)\\
\hline
\end{tabular}
\captionsetup{justification=raggedright}
\caption{This table presents noise tolerance parameters $\Tilde{p}_{Ent}$ and $\Tilde{p}_{GME}$ for entanglement and GME respectively, for some typical states under measurement involving different sets of marginals. The notation \# means that the SDP cannot construct a witness associated with given marginals to detect the entanglement or GME.}
\label{table:sdp_result_ent}
\end{table}

As the definition indicates, the EW $W(\cM)$, corresponding to the marginal set $\cM$ with a length $l(\cM)<l_{Ent}(\rho)$, is unable to detect the entanglement in state $\rho$. Analyzing the numerical results reveals a trade-off between the witness's detection length, marginal settings, and noise tolerance parameters.

In practical experiments, the measurement of every single qubit may incur an error with the same probability, says $p$. The probability of measuring $k$ qubits without error is $(1-p)^k$. From this viewpoint, local measurements are generally less susceptible to noise compared to global ones. Therefore, rather than global depolarizing noise, the local depolarizing channel describes the experimental noise more precisely \begin{equation}
    \Lambda_{ld}(p)=\Lambda_{dep}(p)^{\otimes n},
\end{equation}
which can be viewed as the depolarizing noise applied to each local qubit. Formally, the local depolarizing noise for $n$-qubit state $\rho$ can be expressed using Kraus' operators: \begin{equation}
    \Lambda_{ld}(p)(\rho)=\sum_{i,m}M_i(m,p)\rho M_i(m,p)^\dagger,
\end{equation}
where \begin{equation}
    M_i(m,p)=\sqrt{\left(1-\frac{3p}{4}\right)^m\left(\frac{p}{4}\right)^{n-m}}\bigotimes_{q=1}^n\sigma_{i_q},
\end{equation}
$\sigma_{i_q}$ is Pauli matrix $\{\sigma_0,\sigma_1,\sigma_2,\sigma_3\}$ and $m$ is the amount of $\sigma_0$ operators present in a particular $M_c$. In experiments, the implementation of measurements may incur errors. In this sense, the local depolarizing noise can reflect the degree of measurement accuracy for individual qubits. For example, to emulate the scenario that measurement on individual qubits exists error with probability $p$, one can equivalently apply local depolarizing noise with parameter $p$.

For this form of local depolarizing noise, we calculate the noise tolerance parameter related to both entanglement and GME detection. 
We compare the noise tolerance parameter related to both entanglement and GME detection across witnesses with different lengths, where each instance uses the same witness derived by SDP, and compare the tolerance for these two different kinds of noise, as presented in Table \ref{table:local_dep} and Figure.~\ref{fig:GDN+LDN}. Unsurprisingly, a witness with a longer length is more tolerant of both types of noise due to its better capability to detect entanglement. However, compared to the global depolarizing noise tolerance parameter which is greatly improved for the longer witness, the local depolarizing noise tolerance parameter grows less. Consider the GME noise tolerance parameter for $\ket{W_3}$ state as an example, as shown in the last two columns of Table.~\ref{table:local_dep}, the GME noise tolerance parameter corresponding to the global depolarizing noise increases by $71.7\%$ for the witness of $3$-length compared to $2$-length. In contrast, for local depolarizing noise, it increases by only $24.2\%$. Similar results are obtained for other states.

\begin{table}[!htp]
\begin{tabular}{|c|c|c|c|c|c|}
\hline
Detected state&Marginals of witness&$p^{Ent}_{glo}$&$p^{Ent}_{loc}$&$p^{GME}_{glo}$&$p^{GME}_{loc}$\\
\hline
\multirow{2}{*}{$\ket{W_3}$}&$\{\{12\},\{23\},\{13\}\}$&0.5101&0.3173&0.3034&0.1812\\
\cline{2-6}
&$\{\{123\}\}$&0.7904&0.4244&0.521&0.225\\
\hline
\multirow{3}{*}{$\ket{W_4}$}&collection of all $2$-length marginals&0.3508&0.2614&0.1541&0.1057\\
\cline{2-6}
&collection of all $3$-length marginals&0.776&0.3642&0.4261&0.1425\\
\cline{2-6}
&$\{\{1234\}\}$&0.8889&0.4433&0.5265&0.155\\
\hline
\multirow{3}{*}{$\ket{Dicke_4^2}$}&collection of all $2$-length marginals&0.5232&0.3095&0.3131&0.1712\\
\cline{2-6}
&collection of all $3$-length marginals&0.5232&0.3095&0.3131&0.1712\\
\cline{2-6}
&$\{\{1234\}\}$&0.8889&0.4226&0.5391&0.22\\
\hline
\multirow{2}{*}{$\ket{Cluster_4}$}&collection of all $3$-length marginals&0.8&0.3881&0.5&0.2\\
\cline{2-6}
&$\{\{1234\}\}$&0.8889&0.4684&0.6154&0.2269\\
\hline
\end{tabular}
\captionsetup{justification=raggedright}
\caption{This table presents noise tolerance parameters for entanglement and GME under both global and local depolarizing noise scenarios, for several canonical states within the context of measurements involving distinct marginal sets.}
\label{table:local_dep}
\end{table}

\begin{figure}[!h]
    \centering
    \subfigure[]{ 
            \label{} 
            \includegraphics[width=5.8cm]{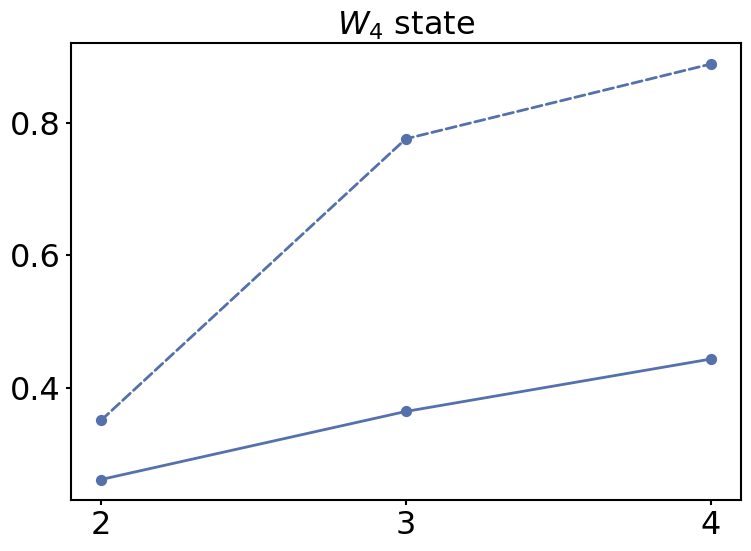}}
    \subfigure[]{ 
            \label{} 
            \includegraphics[width=5.8cm]{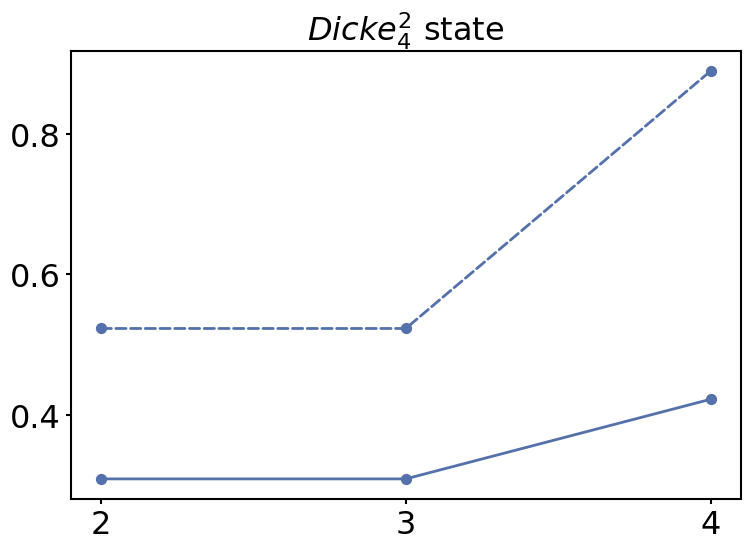}}
    \subfigure[]{ 
            \label{} 
            \includegraphics[width=5.8cm]{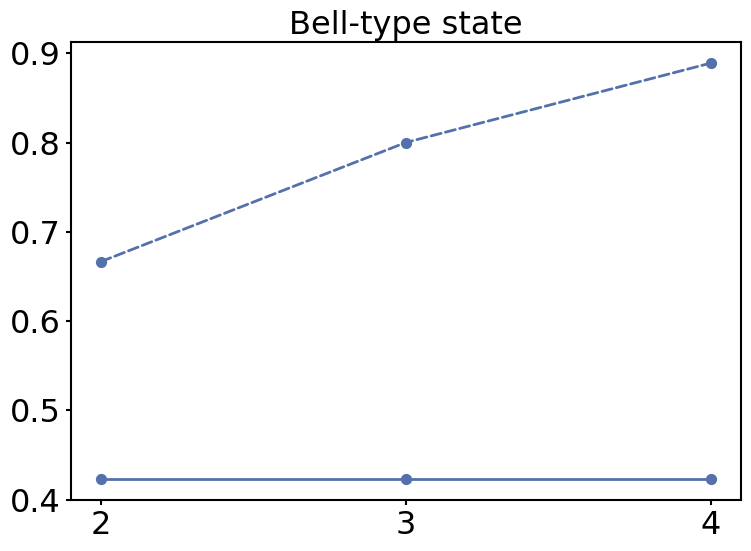}} 

\captionsetup{justification=raggedright}
    \caption{Comparison of noise tolerance among different witness lengths. We illustrate that the noise tolerance for both global depolarizing noise (GDN) and local depolarizing noise (LDN) yields similar results.}
    \label{fig:GDN+LDN}
\end{figure}

\section{Practical application of SDP}\label{appdix:prac}
In this part, we present other practical applications of the SDP method.

For $2$-dimension multipartite ($3$-quibt) Werner state, since it is entangled if and only if it violates the PPT separability criterion \cite{Eggeling2000Werner}, we could determine its entanglement detection length through SDP.
Specifically, a multipartite Werner state $\rho \in \cH_{d}^{\otimes n}$, which satisfy \begin{equation}
  \rho=(U\otimes U\ldots\otimes U)\rho(U^\dagger\otimes U^\dagger\ldots\otimes U^\dagger)
\end{equation}
for all unitary operators $U$, can be determined by $(n!-1)$ parameters $\{\alpha_1,\alpha_2,\ldots\}$, expressed as \begin{equation}
  \ketbra{Werner_n}{Werner_n}= \frac{\mathbb{I}-\sum_{i=1}^{n!-1}\alpha_i*P_{i+1}}{\tr\left(\mathbb{I}-\sum_{i=1}^{n!-1}\alpha_i*P_{i+1}\right)},
\end{equation}
where $P_i$ is the unitary operator that permutes the order of subsystems according to the $i$-th permutation when written in lexicographical order. For example, the lexicographical order of a tripartite system is $\{(1 2 3), (1 3 2), (2 1 3), (2 3 1), (3 1 2),$ $(3 2 1)\}$, so $P_5$ in this case is the operator that permutes these $3$ qubits according to $(312)$. Through numerical simulation, we find the $3$-qubit Werner state with parameter $\{0.7,0.7,0.7,0.7,0.7\}$ has entanglement detection length $l_{Ent}=2$; while the $4$-qubit Werner state with parameters $\{0.7,\ldots,0.7\}$ has entanglement length $l_{Ent}=4$. 

Besides, for the state 
% \begin{example}
 $\rho=p\ketbra{\psi_3}{\psi_3}+(1-p)\ketbra{{\ghz}_3}{{\ghz}_3}$ with $\ket{\psi_3}=(\ket{001}+\ket{010})/\sqrt{2}$ and $0<p<1$ described in proposition \ref{prop:ent_GME_gap}, a $2$-length witness to detect its bipartitioned entanglement over $G=\{1,2\}$ is described as: \begin{equation}\label{eq:bip_witness}
    W=\mathbb{I}+\frac{p}{2}\left(Z^{(2)}+Z^{(3)}-2Z^{(1)}\right)-\left(1-\frac{p}{2}\right)\left(Z^{(1)}Z^{(2)}+Z^{(1)}Z^{(3)}\right)+(1-p)Z^{(2)}Z^{(3)}-cX^{(2)}X^{(3)}
  \end{equation}
  where $c$ is a positive constant to make $W$ a decomposable PPT entanglement witness, which make $W$ expressible as \begin{equation}
    W=P+Q^{T_G}
  \end{equation}
  with $P,Q\geq0$. For example, the bipartitioned entanglement over $G=\{1,2\}$ from state $\rho=p\ketbra{\psi_3}{\psi_3}+(1-p)\ketbra{{\ghz}_3}{{\ghz}_3}$ with parameter $p=0.1$, can be detected from the observable $W$ defined in Eq.~\eqref{eq:bip_witness} with parameter $c=4.23\times 10^{-4}$.
% \end{example}

Moreover, the symmetric state, which is either fully separable or genuine multipartite entangled \cite{Ichikawa2008sym}, can be represented by treating the Dicke state as a basis: \begin{equation}
  \rho=\sum_{i,j=0}^na_{ij}\ket{Dicke_n^i}\bra{Dicke_n^j}.
\end{equation}
Let $A$ denote the matrix formed by the coefficient $A_{ij}=a_{ij}$, which uniquely corresponds to a state. Utilizing the SDP, we provide a $2$-length entanglement witness for the $3$-qubit symmetric state $\rho(A^{(1)})$ with coefficients $A^{(1)}=\begin{pmatrix}
  1&1&1\\
  1&1&1\\
  1&1&1
\end{pmatrix}/3$, and a $3$-length entanglement witness for $\rho(A^{(2)})$ with $A^{(2)}=\begin{pmatrix}
  1&0&0\\
  0&1&0\\
  0&0&1
\end{pmatrix}/3$. Note that for a symmetric state with coefficient $a_{ij}=0$ for $i\neq j$, i.e. in the form of $\rho=\sum_ia_{ii}\ketbra{Dicke_n^i}{Dicke_n^i}$, also referred to as a diagonal symmetric state, it is separable if and only if it is PPT under the partial transpose of $\fc{n}{2}$ system \cite{shi2024entanglement}. Therefore our SDP which utilizes the PPT criterion can exactly determine the entanglement length for an arbitrary diagonal symmetric state, just like for Werner states.
Besides, since all symmetric entangled states are also genuinely entangled states \cite{Ichikawa2008sym}, we determine that $l_{Ent}(\rho(A^{(1)}))=l_{GME}(\rho(A^{(1)}))=2$ and $l_{Ent}(\rho(A^{(2)}))=l_{GME}(\rho(A^{(2)}))=3$ according to our numerical results.

\section{Concrete construction for entanglement witness and Bell's inequality}\label{appdix:construct}
In this section, we present several entanglement witnesses for diverse quantum states, serving as exemplars for the experimental deployment of observables with constrained detection lengths.

\subsection{Entanglement witness}
Concrete construction for entanglement witness $W^{Ent}(\rho)$:
\begin{itemize}
    \item $\ket{W_3}$: \begin{itemize}
        \item Marginal=\{\{12\}\}: \\
$(0.125(II+ZZ)-0.0559(IZ+ZI)-0.1118(XX+YY))\otimes I$
        \item Marginal=\{\{12\},\{23\},\{13\}\}: \\
$0.125I_8-0.0236(IZI+IIZ)-0.037ZII-0.0658(XIX+YIY+XXI+YYI)-0.017I(XX+YY)+0.0843(ZZI+ZIZ)+0.054IZZ$
        \item Marginal=\{\{123\}\}: \\
        $(2I_8+IIZ+IZZ+XIX-XZX+YIY-YZY+ZII+ZZI+2ZZZ)/16-0.08839(IXX+IYY+XXI+XXZ+YYI+YYZ+ZXX+ZYY)$
    \end{itemize}
    \item $\ket{W_4}$: \begin{itemize}
        \item Marginal=\{\{12\}\}: \\
$(I_4+ZZ)\otimes II/16-\sqrt{2}(IZ+XX+YY+ZI)\otimes II/32$
        \item Marginal=\{\{23\},\{34\},\{24\}\}: \\
        $(I_{16}-0.3913(Z^{(2)}+Z^{(3)}+X^{(2)}X^{(4)}+Y^{(2)}Y^{(4)}+X^{(3)}X^{(4)}+Y^{(3)}Y^{(4)})-0.6964Z^{(4)}+0.0648(X^{(2)}X^{(3)}+Y^{(2)}Y^{(3)})+0.1741Z^{(2)}Z^{(3)}+0.54389(Z^{(2)}Z^{(4)}+Z^{(3)}Z^{(4)}))/16$
        % $(I_{16}-0.3939IZII-0.041(XX+YY)\otimes II-0.0364ZIII+0.0568ZZII-0.66IIZI-0.3555I\otimes(XX+YY)\otimes I+0.4896IZZI-0.3888IIIZ-0.3862II\otimes(XX+YY)+0.5413IIZZ++0.0648(IXIX+IYIY)+0.1741IZIZ)/16$
        \item Marginal=\{\{12\},\{23\},\{34\},\{24\},\{14\}\}: \\
        $(I_{16}-0.5727(Z^{(2)}+Z^{(4)})-0.1597(X^{(1)}X^{(2)}+Y^{(1)}Y^{(2)}+X^{(3)}X^{(4)}+Y^{(3)}Y^{(4)})-0.1938(Z^{(1)}+Z^{(3)})+0.2853(Z^{(1)}Z^{(2)}+Z^{(3)}Z^{(4)})+0.0349(X^{(2)}X^{(3)}+Y^{(2)}Y^{(3)}+X^{(1)}X^{(4)}+Y^{(1)}Y^{(4)})+0.1174(Z^{(2)}Z^{(3)}+Z^{(1)}Z^{(4)})-0.4656(X^{(2)}X^{(4)}+Y^{(2)}Y^{(4)})+0.6007Z^{(2)}Z^{(4)})/16$
        \item Marginal=\{\{12\},\{23\},\{34\},\{24\},\{14\},\{13\}\}: \\
        $(I_{16}-0.4134(Z^{(1)}+Z^{(2)}+Z^{(3)}+Z^{(4)})-0.2217(X^{(1)}X^{(2)}+Y^{(1)}Y^{(2)}+X^{(2)}X^{(3)}+Y^{(2)}Y^{(3)}+X^{(3)}X^{(4)}+Y^{(3)}Y^{(4)}+X^{(4)}X^{(1)}+Y^{(4)}Y^{(1)})+0.0866(X^{(1)}X^{(3)}+Y^{(1)}Y^{(3)}+X^{(2)}X^{(4)}+Y^{(2)}Y^{(4)})+0.3614(Z^{(1)}Z^{(2)}+Z^{(2)}Z^{(3)}+Z^{(3)}Z^{(4)}+Z^{(4)}Z^{(1)}))/16$
    \end{itemize}
    \item $\ket{Dicke_4^2}$: \begin{itemize}
        \item Marginal=\{\{12\}\}: \\
        $(I_{8}-XX-YY+ZZ)\otimes II/16$
        \item Marginal=\{\{23\},\{34\},\{24\}\}: \\
        $(I+0.5774(Z^{(3)}Z^{(4)}-X^{(2)}X^{(3)}-Y^{(2)}Y^{(3)}-X^{(2)}X^{(4)}-Y^{(2)}Y^{(4)})+0.2113(X^{(3)}X^{(4)}+Y^{(3)}Y^{(4)})+0.7887(Z^{(2)}Z^{(3)}+Z^{(2)}Z^{(4)}))/16$
        \item Marginal=\{\{12\},\{23\},\{34\},\{24\},\{14\},\{13\}\}: \\
    $(I+0.0813(X^{(1)}X^{(2)}+Y^{(1)}Y^{(2)}+X^{(2)}X^{(4)}+Y^{(2)}Y^{(4)}+X^{(1)}X^{(4)}+Y^{(1)}Y^{(4)})+0.2704(Z^{(1)}Z^{(2)}+Z^{(2)}Z^{(4)}+Z^{(1)}Z^{(4)})-0.3963(X^{(2)}X^{(3)}+Y^{(2)}Y^{(3)}+X^{(3)}X^{(4)}+Y^{(3)}Y^{(4)}+X^{(1)}X^{(3)}+Y^{(1)}Y^{(3)})+0.567(Z^{(2)}Z^{(3)}+Z^{(3)}Z^{(4)}+Z^{(1)}Z^{(3)})/16$
    \end{itemize}
    \item $\ket{Cluster_4}$: \begin{itemize}
        \item Marginal=\{\{123\}\}: \\
    $(I_{8}-XZI-YYZ-ZXZ)\otimes I/16$
        \item Marginal=\{\{123\},\{234\},\{134\}\}: \\
    $(I_{16}+XZII-YYZI-ZXZI-IZXZ-IZYY-XIXZ-XIYY)/16$
        \item Marginal=\{\{1234\}\}: \\
    $(I_{16}-IIZX+XZII-IZXZ-IZYY-XIXZ-XIYY-YYIX-ZXIX+YYZI+ZXZI-XZZX-YXXY+YXYZ+ZYXY-ZYYZ)/16$
    \end{itemize}
\end{itemize}

\subsection{GME witness}
Concrete construction for GME witness $W^{GME}(\rho)$: \begin{itemize}
    \item $\ket{W_3}$: \begin{itemize}
        \item Marginal=\{\{12\},\{23\}\}: \\
$0.125I_8-0.0218(ZII+IIZ)-0.0518IZI-0.0428(XXI+YYI+IXX+IYY)+0.0112(IZZ+ZZI)$
        \item Marginal=\{\{12\},\{23\},\{13\}\}: \\
$0.125I_8-0.0314(ZII+IZI+IIZ)-0.0297(IXX+IYY+XIX+YIY+XXI+YYI)+0.0293(IZZ+ZIZ+ZZI)$
    \end{itemize}
    \item $\ket{W_4}$: \begin{itemize}
        \item Marginal=\{\{12\},\{23\},\{34\}\}: \\
        $(I-0.2514(Z^{(2)}+Z^{(3)})-0.1341(Z^{(1)}+Z^{(4)})-0.2176(X^{(1)}X^{(2)}+Y^{(1)}Y^{(2)}+X^{(3)}X^{(4)}+Y^{(3)}Y^{(4)})-0.2542(X^{(2)}X^{(3)}+Y^{(2)}Y^{(3)})-0.0052(Z^{(1)}Z^{(2)}+Z^{(3)}Z^{(4)})-0.885Z^{(2)}Z^{(3)})/16$
        \item Marginal=\{\{12\},\{13\},\{14\}\}: \\
        $(I-0.4974Z^{(1)}-0.133(Z^{(2)}+Z^{(3)}+Z^{(4)})-0.2177(X^{(1)}X^{(2)}+Y^{(1)}Y^{(2)}+X^{(1)}X^{(3)}+Y^{(1)}Y^{(3)}+X^{(1)}X^{(4)}+Y^{(1)}Y^{(4)})+0.0318(Z^{(1)}Z^{(2)}+Z^{(1)}Z^{(3)}+Z^{(1)}Z^{(4)}))/16$
        \item Marginal=\{\{12\},\{23\},\{34\},\{14\}\}: \\
        $(I-0.22626(Z^{(1)}+Z^{(2)}+Z^{(3)}+Z^{(4)})-0.1548(X^{(1)}X^{(2)}+Y^{(1)}Y^{(2)}+X^{(2)}X^{(3)}+Y^{(2)}Y^{(3)}+X^{(3)}X^{(4)}+Y^{(3)}Y^{(4)}+X^{(1)}X^{(4)}+Y^{(1)}Y^{(4)})+0.078(Z^{(1)}Z^{(2)}+Z^{(2)}Z^{(3)}+Z^{(3)}Z^{(4)}+Z^{(1)}Z^{(4)}))/16$
        \item Marginal=\{\{12\},\{23\},\{34\},\{24\}\}: \\
        $(I-0.4167Z^{(2)}-0.1428Z^{(1)}-0.1782(Z^{(3)}+Z^{(4)})-0.2788(X^{(1)}X^{(2)}+Y^{(1)}Y^{(2)})-0.0104Z^{(1)}Z^{(2)}-0.1561(X^{(2)}X^{(3)}+Y^{(2)}Y^{(3)}+X^{(2)}X^{(4)}+Y^{(2)}Y^{(4)})+0.0429(Z^{(2)}Z^{(3)}+Z^{(2)}Z^{(4)})-0.0623(X^{(3)}X^{(4)}+Y^{(3)}Y^{(4)})+0.061Z^{(3)}Z^{(4)})/16$
        \item Marginal=\{\{12\},\{23\},\{34\},\{24\},\{14\}\}: \\
        $(I-0.303(Z^{(2)}+Z^{(4)})-0.219(Z^{(2)}+Z^{(4)})-0.1634(X^{(1)}X^{(2)}+Y^{(1)}Y^{(2)}+X^{(2)}X^{(3)}+Y^{(2)}Y^{(3)}+X^{(3)}X^{(4)}+Y^{(3)}Y^{(4)}+X^{(1)}X^{(4)}+Y^{(1)}Y^{(4)})+0.0503(Z^{(1)}Z^{(2)}+Z^{(2)}Z^{(3)}+Z^{(3)}Z^{(4)}+Z^{(1)}Z^{(4)})+0.0238(X^{(2)}X^{(4)}+Y^{(2)}Y^{(4)})+0.1544Z^{(1)}Z^{(4)})/16$
        \item Marginal=\{\{12\},\{23\},\{34\},\{14\},\{13\},\{24\}\}: \\
        $(I-0.2802(Z^{(1)}+Z^{(2)}+Z^{(3)}+Z^{(4)})-0.1037(X^{(1)}X^{(2)}+Y^{(1)}Y^{(2)}+X^{(2)}X^{(3)}+Y^{(2)}Y^{(3)}+X^{(3)}X^{(4)}+Y^{(3)}Y^{(4)}+X^{(1)}X^{(4)}+Y^{(1)}Y^{(4)}+X^{(1)}X^{(3)}+Y^{(1)}Y^{(3)}+X^{(2)}X^{(4)}+Y^{(2)}Y^{(4)})+0.0919(Z^{(1)}Z^{(2)}+Z^{(2)}Z^{(3)}+Z^{(3)}Z^{(4)}+Z^{(1)}Z^{(4)}+Z^{(2)}Z^{(4)}+Z^{(1)}Z^{(3)}))/16$
    \end{itemize}
    \item $\ket{Dicke_4^2}$: \begin{itemize}
        \item Marginal=\{\{12\},\{23\},\{34\}\}: \\
        $(I-0.13(X^{(1)}X^{(2)}+Y^{(1)}Y^{(2)}+X^{(3)}X^{(4)}+Y^{(3)}Y^{(4)})-0.5659X^{(2)}X^{(3)}+Y^{(2)}Y^{(3)}+0.1838(Z^{(1)}Z^{(2)}+Z^{(3)}Z^{(4)})-0.3579Z^{(2)}Z^{(3)})/16$
        \item Marginal=\{\{12\},\{13\},\{14\}\}: \\
        $(I-0.2711(X^{(1)}X^{(2)}+Y^{(1)}Y^{(2)}+X^{(1)}X^{(3)}+Y^{(1)}Y^{(3)}+X^{(1)}X^{(4)}+Y^{(1)}Y^{(4)})+0.0641(Z^{(1)}Z^{(2)}+Z^{(1)}Z^{(3)}+Z^{(1)}Z^{(4)}))/16$
        \item Marginal=\{\{12\},\{23\},\{34\}\}: \\
        $(I-0.216(X^{(1)}X^{(2)}+Y^{(1)}Y^{(2)}+X^{(2)}X^{(3)}+Y^{(2)}Y^{(3)}+X^{(3)}X^{(4)}+Y^{(3)}Y^{(4)}+X^{(1)}X^{(4)}+Y^{(1)}Y^{(4)})+0.0266(Z^{(1)}Z^{(2)}+Z^{(2)}Z^{(3)}+Z^{(3)}Z^{(4)}+Z^{(1)}Z^{(4)})$
        \item Marginal=\{\{12\},\{23\},\{34\},\{24\}\}: \\
        $(I-0.1863(X^{(2)}X^{(3)}+Y^{(2)}Y^{(3)}+X^{(2)}X^{(4)}+Y^{(2)}Y^{(4)})-0.0416(X^{(3)}X^{(4)}+Y^{(3)}Y^{(4)})-0.3057(X^{(1)}X^{(2)}+Y^{(1)}Y^{(2)})+0.2305(Z^{(2)}Z^{(3)}+Z^{(2)}Z^{(4)})+0.2116Z^{(3)}Z^{(4)}+0.0851Z^{(1)}Z^{(2)}$
        \item Marginal=\{\{12\},\{23\},\{34\},\{24\},\{14\}\}: \\
        $(I-0.1287(X^{(1)}X^{(2)}+Y^{(1)}Y^{(2)}+X^{(2)}X^{(3)}+Y^{(2)}Y^{(3)}+X^{(3)}X^{(4)}+Y^{(3)}Y^{(4)}+X^{(1)}X^{(4)}+Y^{(1)}Y^{(4)})+0.2403(Z^{(1)}Z^{(2)}+Z^{(2)}Z^{(3)}+Z^{(3)}Z^{(4)}+Z^{(1)}Z^{(4)})-0.1551(X^{(2)}X^{(4)}+Y^{(2)}Y^{(4)})+0.3132Z^{(2)}Z^{(4)})/16$
        \item Marginal=\{\{12\},\{23\},\{34\},\{24\},\{14\},\{13\}: \\
        $(I-0.1228(X^{(1)}X^{(2)}+Y^{(1)}Y^{(2)}+X^{(2)}X^{(3)}+Y^{(2)}Y^{(3)}+X^{(3)}X^{(4)}+Y^{(3)}Y^{(4)}+X^{(1)}X^{(4)}+Y^{(1)}Y^{(4)}+X^{(2)}X^{(4)}+Y^{(2)}Y^{(4)}+X^{(1)}X^{(3)}+Y^{(1)}Y^{(3)})+0.2368(Z^{(1)}Z^{(2)}+Z^{(2)}Z^{(3)}+Z^{(3)}Z^{(4)}+Z^{(1)}Z^{(4)}+Z^{(2)}Z^{(4)}+Z^{(1)}Z^{(3)}))/16$
    \end{itemize}
    \item $\ket{Cluster_4}:$ \begin{itemize}
        \item Marginal=\{\{124\},\{134\}\}: \\
        $(4I_{16}-XZII+YYIX+ZXIX+IIZX+XIXZ+XIYY)/64$
        \item Marginal=\{\{123\},\{234\},\{124\}\}: \\
        $(I-0.2092XZII-0.2287(YYZI+ZXZI+YYIX+ZXIX)+0.1241IIZX-0.3333(IZXZ+IZYY))/16$
        \item Marginal=\{\{123\},\{234\},\{124\},\{134\}\}: \\
        $(I+0.934(XZII+IIZX)-0.2734(YYZI+ZXZI+YYIX+ZXIX+XIXZ+XIYY+IZXZ+IZYY))/16$
        \item Marginal=\{\{1234\}\}: \\
        $(5I_{16}+IIZX-IZXI-IZYY-XIXZ-XIYY+XZII-3XZZX-YXXY+YXYZ-YYIX-YYZI-ZXIX-ZXZI+ZYXY-ZYYZ)/80$
    \end{itemize}
    \item $\ket{Ring_4}:$ \begin{itemize}
        \item Marginal=\{\{123\},\{234\}\}: \\
        $(4I_{16}+YXYI-XIXI-ZXZI-IXIX=IYXY-IZXZ)/64$
        \item Marginal=\{\{123\},\{234\},\{124\}\}: \\
        $(I+0.1241XIXI+0.333(YXYI-ZXZI)-0.2092IXIX+0.2287(IYXY-IZXZ+XYIY-XZIZ))/16$
        \item Marginal=\{\{1234\}\}: \\
        $(5I_{16}+IXIX+IYXY-IZXZ+XIXI-3XXXX+XYIY-XZIZ+YIYX+YXYI-YYZZ-YZZY-ZIZX-ZXZI-ZYYZ-ZZYY)/80$
    \end{itemize}
\end{itemize}

\bibliographystyle{apsrev4-1}
% choose a .bib file
\bibliography{entangle_length.bib}
\end{document}